\def\ps@headings{%
\def\@oddhead{\mbox{}\scriptsize\rightmark \hfil \thepage}%
\def\@evenhead{\scriptsize\thepage \hfil \leftmark\mbox{}}%
\def\@oddfoot{}%
\def\@evenfoot{}}
\begin{document}


\title{Forwarding, Caching and Congestion Control in Named Data Networks}
\author{Edmund Yeh, Tracey Ho, Ying Cui, Ran Liu, Derek Leong and Michael Burd}
\author{
  \IEEEauthorblockN{Edmund Yeh\thanks{E.~Yeh gratefully acknowledges support from the National Science Foundation grant CNS-1423250 and a Cisco Systems research grant. Y. Cui gratefully acknowledges support from the National Science Foundation of China grant 61401272.}}
  \IEEEauthorblockA{Northeastern University\\
  eyeh@ece.neu.edu}\\
  \IEEEauthorblockN{Ran Liu}
  \IEEEauthorblockA{Northeastern University\\
  liu.ran1@husky.neu.edu}

  \and
  \IEEEauthorblockN{Tracey Ho}
  \IEEEauthorblockA{Speedy Packets Inc.\\
  tracey@speedypackets.com}\\
     \IEEEauthorblockN{Michael Burd}
   \IEEEauthorblockA{Honeywell FM\&T\\
   burdmi@gmail.com}

  \and
   \IEEEauthorblockN{Ying Cui}
   \IEEEauthorblockA{Shanghai Jiao Tong University\\
   cuiying@sjtu.edu.cn}\\
   \IEEEauthorblockN{Derek Leong}
  \IEEEauthorblockA{Inst. for Infocomm Research\\
  dleong@i2r.a-star.edu.sg}
}

\maketitle

\newtheorem{Thm}{Theorem}
\newtheorem{Alg}{Algorithm}
\newtheorem{Def}{Definition}
\newtheorem{Rem}{Remark}

\allowdisplaybreaks

\begin{abstract}
Emerging information-centric networking architectures seek to optimally utilize both bandwidth and storage for efficient content distribution.   This highlights the need for joint design of traffic engineering and caching strategies, in order to optimize network performance in view of both current traffic loads and future traffic demands.   We present a systematic framework for joint dynamic interest request forwarding and dynamic cache placement and eviction, within the context of the Named Data Networking (NDN) architecture.  The framework employs a virtual control plane which operates on the user demand rate for data objects in the network, and an {actual} plane which handles Interest Packets and Data Packets.  We develop distributed algorithms within the virtual plane to achieve network load balancing through dynamic forwarding and caching, thereby maximizing the user demand rate that the NDN network can satisfy.   Next, we show that congestion control can be optimally combined with forwarding and caching within this framework to maximize user utilities subject to network stability.
Numerical experiments within a number of network settings demonstrate the superior performance of the resulting algorithms for the actual plane in terms of high user utilities, low user delay, and high rate of cache hits.

\end{abstract}





\section{Introduction}

Emerging information-centric networking (ICN) architectures are currently changing the landscape of network research.
In particular, Named data networking (NDN)~\cite{Zhang10}, or content-centric networking (CCN)\cite{Jacobson},
is a proposed network
architecture for the Internet that replaces the traditional client-server model of communications with
one based on the identity of data or content.  This abstraction more accurately reflects how the
Internet is primarily used today: instead of being concerned about communicating with specific
nodes, end users are mainly interested in obtaining the data they want.  The NDN architecture offers
a number of important advantages in decreasing network congestion and delays, and in enhancing
network performance in dynamic, intermittent, and unreliable mobile wireless environments~\cite{Zhang10}.

Content delivery in NDN is accomplished using two types of packets, and specific data structures in nodes.  Communication is initiated by the data consumer or requester.  To receive data, the requester sends out an {\em Interest Packet}, which carries the (hierarchically structured) name of the desired data (e.g. {\tt /northeastern/videos/WidgetA.mpg/1}).  The Interest Packet is forwarded by looking up the data name in the {\em Forwarding Information Base (FIB)} at each router the Interest Packet traverses, along routes determined by a name-based routing protocol.
The FIB tells the router to which neighbor node(s) to transmit each Interest Packet.   Each router maintains a {\em Pending Interest Table (PIT)}, which records all Interest Packets currently awaiting matching data.  Each PIT entry contains the name of the interest and the set of node interfaces from which the Interest Packets for the same name arrived.  When multiple interests for the same name are received, only the first is sent toward the data source.  When a node receives an interest that it can fulfill with matching data, it creates a {\em Data Packet} containing the data name, the data content, together with a signature by the producer's key.  The Data Packet follows in reverse the path taken by the corresponding Interest Packet, as recorded by the PIT state at each router traversed.  When the Data Packet arrives at a router, the router locates the matching PIT entry, transmits the data on all interfaces listed in the PIT entry, and then removes the PIT entry.  The router may optionally cache a copy of the received Data Packet in its local {\em Content Store}, in order to satisfy possible future requests.  Consequently, a request for a data object can be fulfilled not only by the content source but also by any node with a copy of that object in its cache~\cite{Zhang10}.

Assuming the prevalence of caches, the usual approaches for forwarding and caching may no longer be effective for
ICN architectures such as NDN.  Instead, these architectures seek to optimally utilize both bandwidth and storage for efficient content distribution.   This highlights the need for joint design of traffic engineering and caching strategies, in order to optimize network performance in view of both current traffic loads and future traffic demands.
Unlike many existing works
on centralized algorithms for static caching, our goal is to develop distributed,
dynamic algorithms that can address caching and forwarding under changing content,
user demands, and network conditions.

To address this fundamental problem, we introduce the {\em VIP framework} for the design of high performing NDN networks.  The VIP framework relies on the new metric of {\em Virtual Interest Packets} (VIPs), which captures the measured demand for the respective data objects in the network.
The central idea of the VIP framework is to employ a \emph{virtual} control plane which operates on VIPs, and an \emph{actual} plane which handles Interest Packets and Data Packets.  Within the virtual plane, we develop distributed control algorithms operating on VIPs, aimed at yielding desirable performance in terms of network metrics of concern.  The flow rates and queue lengths of the VIPs resulting from the control algorithm in the virtual plane are then used to specify the forwarding and caching policies in the actual plane.

The general VIP framework allows for a large class of control and optimization algorithms operating on VIPs in the virtual plane, as well as a large class of mappings which use the VIP flow rates and queue lengths from the virtual plane to specify forwarding and caching in the actual plane.  Thus, the VIP framework presents a powerful paradigm for designing efficient NDN-based networks with different properties and trade-offs.  In order to illustrate the utility of the VIP framework, we present two particular instantiations of the framework.  The first instantiation consists of a distributed forwarding and caching policy in the virtual plane which achieves effective load balancing and adaptively maximizes the throughput of VIPs, thereby maximizing the user demand rate for data objects satisfied by the NDN network.  The second instantiation consists of distributed algorithms which achieves not only load balancing but also stable caching configurations.  Next, we show that congestion control can be naturally combined with forwarding and caching within the VIP framework to
maximize user utilities subject to network stability.  Experimental results show that the latter set of algorithms have superior performance in terms of high user utilities, low user delay and high rate of cache hits, relative to several baseline congestion control, forwarding and caching policies.

We begin with a formal description of the network model in Section~\ref{sec:model}, and discuss the VIP framework in Section~\ref{sec:vipframe}.
We present two instantiations of the VIP framework in Sections~\ref{sec:forwarding-caching-VIP} and~\ref{sec:forwarding-caching-stable}.  Congestion control is discussed in~\ref{sec:congestion-VIP}.  The performance of the proposed forwarding and caching policies is numerically evaluated in comparison with several baseline routing and caching policies using simulations in Section~\ref{sec:simulations}.

Although there is now a rapidly growing literature in information centric networking, the problem of optimal joint forwarding and caching for content-oriented networks remains open.
In \cite{PotentialRouting2012}, a potential-based forwarding scheme with random caching is proposed for ICNs.  A simple heuristically defined measure (called potential value) is introduced for each node.  A content source or caching node has the lowest potential and the potential value of a node increases with its distance to the content source or caching node.  Potential-based forwarding guides Interest Packets from the requester toward the corresponding content source or caching node.
As the Data Packet travels on the reverse path, one node on the path is randomly selected as a new caching node.  The results in~\cite{PotentialRouting2012} are heuristic in the sense that it remains unknown how to guarantee good performance by choosing proper potential values.
In \cite{CachingTM2011:Ying},  the authors consider one-hop routing and caching in a content distribution network (CDN) setting.
Throughput-optimal one-hop routing and caching are proposed to support the maximum number of requests.
Given the simple switch topology, however, routing is reduced to cache node selection.  Throughput-optimal caching and routing in  multi-hop networks remains an open problem.
In \cite{TECC2012}, the authors consider single-path routing and caching to minimize link utilization for a general multi-hop content-oriented network, using primal-dual decomposition within a flow model.  Here, it is assumed that the path between any two nodes is predetermined.  Thus, routing design reduces to cache node selection~\cite{TECC2012}.
The benefits of selective caching based on the concept of betweenness centrality, relative to ubiquitous caching, are shown in~\cite{Chai:2012:CLM:2342042.2342046}.
Cooperative caching within ICNs has been investigated in~\cite{Age-based:6193504}, where an age-based caching scheme is proposed.  These proposed cooperative caching schemes have been heuristically designed, and have not been jointly optimized with forwarding strategies.  Finally, adaptive multipath forwarding in NDN has been examined in~\cite{Yi:2012:AFN:2317307.2317319}, but has not been jointly optimized with caching strategies.

\section{Network Model}

\label{sec:model}

{Consider a connected multi-hop (wireline) network modeled by a directed graph $\mathcal G=(\mathcal N, \mathcal L)$, where $\mathcal N$ and $\mathcal L$ denote the sets of $N$ nodes and $L$ directed links, respectively.  Assume that $(b,a) \in {\cal L}$ whenever $(a,b) \in {\cal L}$.  Let $C_{ab} > 0$ be the transmission capacity (in bits/second) of link $(a,b) \in {\cal L}$.  Let $L_n$ be the cache size (in bits) at node $n \in {\cal N}$ ($L_n$ can be zero).


Assume that content in the network are identified as {\em data objects}, with the object identifiers determined by an appropriate level within the hierarchical naming structure.  These identifiers may arise naturally from the application, and are determined in part by the amount of control state that the network is able to maintain.  Each data object (e.g.~{\tt /northeastern/videos/WidgetA.mpg}) consists of a sequence of {\em data chunks} (e.g.~{\tt /northeastern/videos/WidgetA.mpg/1}).    We assume that any data object is demarcated by a {\em starting chunk} and an {\em ending chunk}.  Content delivery in NDN operates at the level of data chunks.  That is, each Interest Packet requests a particular data chunk, and a matching Data Packet consists of the requested data chunk, the data chunk name, and a signature.    A request for a data object consists of a sequence of Interest Packets which request all the data chunks of the object, where the sequence starts with the Interest Packet requesting the starting chunk, and ends with the Interest Packet requesting the ending chunk.\footnote{The data chunks in between the starting and ending chunks can be requested in any order.}
In the VIP framework which we introduce below, distributed control algorithms are developed in a virtual control plane operating at the data object level, while forwarding of Interest Packets and caching of Data Packets in the actual plane operate at the data chunk level.

We will operate our forwarding and caching algorithms over a set ${\cal K}$ of $K$ data objects in the network.  As mentioned above, ${\cal K}$ may be determined by the amount of control state that the network is able to maintain.  Since the data object popularity distribution evolves at a relatively slow time scale compared to the caching and forwarding, one approach is to let ${\cal K}$ include the set of the most popular data objects in the network, which is typically responsible for most of the network congestion.\footnote{The less popular data objects not in ${\cal K}$ may be distributed using simple techniques such as shortest-path forwarding with little or no caching.}
For simplicity, we assume that
all data objects have the same size $z$ (in bits).   The results in the paper can be
extended to the more general case where object sizes differ.   We consider
the scenario where  $L_n < Kz$ for all $n \in
{\cal N}$.  Thus, no node can cache all data objects.

For each data object $k \in {\cal K}$, assume
that there is a unique node $src(k) \in {\cal N}$ which serves as the
content source for the object.    Interest Packets for chunks of a given data object
can enter the network at any node, and exit the network upon being satisfied
by a matching Data Packet at the content source for the object, or at the
nodes which decide to cache the object.   For convenience, we assume that
the content sources are fixed, while the caching points may vary in time.


Assume that routing (topology discovery and data reachability) has already been accomplished in the network, so that the FIBs have been populated for the various data objects.
Upon the arrival of an Interest Packet at an NDN node, the following sequence of events happen.  First, the node checks its Content Store (CS) to see if the requested data object chunk is locally cached.  If it is, then the Interest Packet is satisfied locally, and a Data Packet containing a copy of the data object chunk is sent on the reverse path.  If not, the node checks its PIT to see if an Interest Packet requesting the same data object chunk has already been forwarded.  If so, the new Interest Packet (interest, for short) is suppressed while the incoming interface associated with the new interest is added to the PIT.  Otherwise, the node checks the FIB to see to what node(s) the interest can be forwarded, and chooses a subset of those nodes for forwarding the interest.   Next, we focus on Data Packets.  Upon receiving a Data Packet, a node needs to determine whether to make a copy of the Data Packet and cache the copy or not.
Clearly, policies for the forwarding of Interest Packets and the caching of Data Packets are of central importance in the NDN architecture.  Thus far, the design of the strategy layer for NDN remains largely unspecified.  Moreover, in the current CCN implementation, a Data Packet is cached at every node on the reverse path.  This, however, may not be possible or desirable when cache space is limited.

We shall focus on the problem of finding dynamic forwarding and caching policies which exhibit superior performance in terms of metrics such as the total number of data object requests satisfied (i.e., all corresponding Data Packets are received by the requesting node), the delay in satisfying Interest Packets, and cache hit rates.  We propose a VIP framework to solve this problem, as described in the next section.



\section{Virtual Interest Packets and\\ the VIP Framework}

\label{sec:vipframe}

\begin{figure}[t]
\begin{center}
\includegraphics[width=80mm,height=25mm]{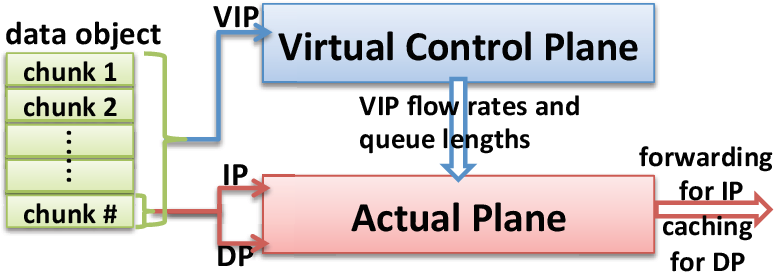}
\caption{VIP framework.  IP (DP) stands for Interest Packet (Data Packet).}\label{fig:planes}
\end{center}
\end{figure}

The VIP framework for joint dynamic forwarding and caching relies on the essential new metric of {\em virtual interest packets} (VIPs), which are generated as follows.   As illustrated in Figure~\ref{fig:planes}, for each request for data object $k \in {\cal K}$ entering the network, a corresponding VIP for object $k \in {\cal K}$ is generated.\footnote{More generally, VIPs can be generated at a rate proportional to that of the corresponding data object requests, which can in some cases improve the convergence speed of the proposed algorithms.}
The {VIPs capture the {\em measured demand} for the respective data objects} in the network, and represent content popularity which is empirically measured, rather than being given a priori.  Specifically, the VIP count for a data object in a given part of the network represents the {\em local} level of interest in the data object, as determined by network topology and user demand.

The VIP framework employs a \emph{virtual} control plane which operates on VIPs {\em at the data object level}, and an \emph{actual} plane which handles Interest Packets and Data Packets {\em at the data chunk level}.   This design has two motivations.  First, this approach reduces the implementation complexity of the VIP algorithm in the virtual plane considerably (as compared with operating on data chunks in the virtual plane).
Second, as shown in Section 4.2 below, this approach leads to a desirable implementation which forwards all the Interest Packets for the same ongoing request for a data object on the same path, and which caches the entire data object (consisting of all data chunks) at a caching node (as opposed to caching different chunks of the same data object at different nodes).   At the same time, the approach also allows Interest Packets for non-overlapping requests for the same data object to be forwarded on different paths, thus making multi-path forwarding of object requests possible.\footnote{In principle, the VIP algorithm in the virtual plane can be applied at the chunk level (corresponding to the case where there is only one chunk in each data object).  In this case, the virtual and actual planes operate at the same granularity.  On the other hand, the complexity of implementing the algorithm in the virtual plane would be much larger.}

Within the virtual plane, we develop distributed control algorithms operating on VIPs, aimed at yielding desirable performance in terms of network metrics of concern.  The flow rates and queue lengths of the VIPs resulting from the control algorithm in the virtual plane are then used to specify the forwarding and caching policies in the actual plane (see Figure~\ref{fig:planes}).
A key insight here is that control algorithms operating in the virtual plane can take advantage of local information on network demand (as represented by the VIP counts), which is unavailable in the actual plane due to interest collapsing and suppression.

In order to illustrate the utility of the VIP framework, we present two particular instantiations of the framework in Sections~\ref{sec:forwarding-caching-VIP} and~\ref{sec:forwarding-caching-stable}.   For both instantiations, the following  hold.  First,  the VIP count is used as the common metric for enabling both the distributed forwarding and distributed caching algorithms in the virtual  and actual control planes.  Second, the forwarding strategy in the virtual plane achieves load balancing through the application of the backpressure algorithm~\cite{Tassiulas-Ephremides:1992-AC} to the VIP queue state.  Finally, one caching algorithm determines the caching locations and cache replacement policy for both the virtual and actual planes.  The two instantiations differ in the manner in which they use the VIP count to determine caching actions.

\subsection{VIP Dynamics}

We now specify the dynamics of the VIPs within the virtual plane.
Consider time slots of length 1 (without loss of generality) indexed by $t = 1, 2, \ldots$.  Specifically, time slot $t$ refers to the time interval $[t,t+1)$.
Within the virtual plane, each node $n\in {\cal N}$ maintains a separate VIP queue  for each data object $k \in {\cal K}$.
Note that no data is contained in these VIPs.  Thus, the VIP queue size for each node $n$ and data object $k$ at the beginning of slot $t$ (i.e., at time $t$) is  represented by a {\em counter} $V_n^k(t)$.\footnote{We assume that VIPs can be quantified as a real number.  This is reasonable when the VIP counts are large.}  Initially, all VIP counters are set to 0, i.e., $V_n^k(1)=0$.
As VIPs are created along with data object requests, the counters for the corresponding data object are incremented accordingly at the entry nodes.  After being forwarded through the network (in the virtual plane), the VIPs for object $k$ are removed at the content source $src(k)$, and at nodes that have cached object $k$.
That is, the content source and the caching nodes are the {\em sinks} for the VIPs.   Physically, the VIP count can be interpreted as a {\em potential}.  For any  data object, there is a downward ``gradient" from entry points of the data object requests to the content source and caching nodes.

An exogenous request for data object $k$ is considered to have arrived at node $n$ if the Interest Packet requesting the starting chunk of data object $k$ has arrived at node $n$.
Let $A^k_n(t)$ be the number of exogenous data object request arrivals at node $n$ for object $k$ during slot $t$ (i.e., over the time interval $[t,t+1)$).\footnote{We think of a node $n$ as a point of aggregation which combines many network users.  While a single user may request a given data object only once, an aggregation point is likely to submit many requests for a given data object over time.}
For every arriving request for data object $k$ at node $n$, a corresponding VIP for object $k$ is generated at $n$ ($V^k_n(t)$ incremented by 1).\footnote{For the general case where object sizes differ, $V^k_n(t)$  is incremented by the object size $z_k$ for every arriving request for object $k$.}
The long-term exogenous VIP arrival rate at node $n$ for object $k$ is
$ \lambda_n^k\triangleq \mathbb \lim_{t \rightarrow \infty} \frac{1}{t} \sum_{\tau = 1}^t A^k_n(\tau).$

Let $\mu_{ab}^k(t)  \geq 0$ be the allocated transmission rate of VIPs for data object $k$ over link $(a,b)$ during time slot $t$.  Note that
at each time $t$ and for each object $k$, a single message between node $a$ and node $b$ can summarize all the VIP transmissions during that time slot.

In the virtual plane, we assume that at each time $t$, each node $n \in {\cal N}$ can gain access to any data object $k \in {\cal K}$ for which there is interest at $n$, and potentially cache the object locally.
Let $s_n^{ k}(t) \in \{0,1\}$ represent the caching state for object $k$ at node $n$ during slot $t$, where $s_n^{k}(t)=1$ if object $k$  is cached at node $n$ during slot $t$, and $s_n^{k}(t)=0$ otherwise.  Now note that even if $s_n^{k}(t)=1$, the content store at node $n$ can satisfy only a limited number of VIPs during one time slot.  This is because there is a maximum rate $r_n$ (in objects per slot) at which node $n$ can produce copies of cached object $k$.\footnote{The maximum rate $r_n$ may reflect the I/O rate of the storage disk. Since it is assumed that all data objects have the same length, it is also assumed that the maximum rate $r_n$ is the same for all data objects.}

The time evolution of the VIP count at node $n$ for object $k$ is as follows:

\begin{align}
& V^k_n(t+1) \leq \nonumber\\
&   \left(
\left(V^k_n(t)-\sum_{b\in \mathcal N}\mu^{k}_{nb}(t)\right)^+ +A^k_n(t)
+\sum_{a\in \mathcal N}\mu^{k}_{an}(t)- r_n s_n^{k}(t)\right)^+
\label{eqn:queue_dyn}
\end{align}
where $(x)^+ \triangleq \max(x,0)$.  Furthermore,
$V^k_n(t) = 0$ for all $t \geq 1$ if $n=src(k)$.

From~\eqref{eqn:queue_dyn}, it can be seen that the VIPs for data object $k$ at node $n$ at the beginning of slot $t$ are transmitted during slot $t$ at the rate $\sum_{b\in \mathcal N}\mu^{k}_{nb}(t)$.  The remaining VIPs $(V^k_n(t)-\sum_{b\in \mathcal N}\mu^{k}_{nb}(t))^+$, as well as the exogenous and endogenous VIP arrivals during slot $t$, are reduced by $r_n$ at the end of slot $t$ if object $k$ is cached at node $n$ in slot $t$ ($s_n^k(t) = 1$). The VIPs still remaining are then transmitted during the next slot $t+1$.
Note that~\eqref{eqn:queue_dyn} is an inequality because the actual number of VIPs for object $k$
arriving to node $n$ during slot $t$ may be less than $\sum_{a\in \mathcal N}\mu^{k}_{an}(t)$ if the neighboring nodes have little or
no VIPs of object $k$ to transmit.

\section{Throughput Optimal VIP Control}


\label{sec:forwarding-caching-VIP}

In this section, we describe an instantiation of the VIP framework in which the VIP count is used as a common metric for enabling both the distributed forwarding and distributed caching algorithms in the virtual and actual control planes.  The forwarding strategy within the virtual plane is given by the application of the backpressure algorithm~\cite{Tassiulas-Ephremides:1992-AC} to the VIP queue state.
Note that while the backpressure algorithm has been used for routing in conventional source-destination-based networks, its use for forwarding in ICNs appears for the first time in this paper.  Furthermore, backpressure forwarding is being used in the virtual plane rather than in the actual plane, where interest collapsing and suppression make the application of the algorithm impractical.

The caching strategy is given by the solution of a max-weight problem involving the VIP queue length.  The VIP flow rates and queue lengths are then used to specify forwarding and caching strategies in the actual plane, which handles Interest Packets and Data Packets.  We show that the joint distributed forwarding and caching strategy adaptively maximizes the throughput of VIPs, thereby maximizing the user demand rate for data objects satisfied by the network.

We now describe the joint forwarding and caching algorithm for VIPs in the virtual control plane.
\begin{Alg}  At the beginning of each time slot $t$, observe the VIP counts $(V^k_n(t))_{k\in \mathcal K, n \in \mathcal N}$ and perform forwarding and caching in the virtual plane as follows.

\textbf{Forwarding}: For each data object $k \in {\cal K}$ and each link $(a,b)\in \mathcal
L^k$,  choose
\begin{align}
\mu^{k}_{ab}(t)
=&\begin{cases} C_{ba}/z,
&  W^*_{ab}(t)>0\  \text{and}\ k=k^*_{ab}(t)\\
0, &\text{otherwise}
\end{cases}\label{eqn:forwarding-VIP}
\end{align}
where
\begin{align}
& W_{ab}^{k}(t) \triangleq V^k_a(t) - V^k_b(t),
\label{eqn:weight}\\
&k^*_{ab}(t) \triangleq \arg
\max_{\{k: (a,b)\in \mathcal L^{k}\}} W_{ab}^{k}(t), \nonumber \\
& W^*_{ab}(t)
\triangleq \left(W_{ab}^{k^*_{ab}(t)}(t)\right)^+.\nonumber
\end{align}
Here,  $\mathcal L^k$ is the set of links which are allowed to transmit the VIPs of object $k$,
$W_{ab}^{k}(t)$ is the backpressure weight of object $k$ on link $(a,b)$ at time $t$, and
$k^*_{ab}(t)$ is the data object which maximizes the backpressure weight on link $(a,b)$ at time $t$.

\textbf{Caching}:  At each node $n \in \mathcal N$, choose $\{s^k_n(t)\}$ to
\begin{equation}
\text{maximize} \  \sum_{k\in \mathcal K}V^k_n(t) s^k_n \quad
\text{subject~to} \ \sum_{k\in \mathcal K} s^k_n\leq L_n/z
\label{eqn:knapsack-VIP}
\end{equation}

Based on the forwarding and caching in \eqref{eqn:forwarding-VIP} and \eqref{eqn:knapsack-VIP}, the VIP count is updated according to \eqref{eqn:queue_dyn}.
\label{Alg:VIP}
\end{Alg}

At each time $t$ and for each link $(a,b)$, backpressure-based forwarding algorithm allocates the entire normalized ``reverse" link capacity $C_{ba}/z$ to transmit the VIPs for the data object $k^*_{ab}(t)$ which maximizes the VIP queue difference $W_{ab}^{k}(t) $ in \eqref{eqn:weight}.  Backpressure forwarding maximally balances out the VIP counts, and therefore the demand for data objects in the network, thereby minimizing the probability of demand building up in any one part of the network and causing congestion.

The caching strategy is given by the optimal solution to the max-weight knapsack problem in \eqref{eqn:knapsack-VIP}, which can be solved optimally in a greedy manner as follows.  For each $n \in {\cal N}$, let $(k_1,k_2,\ldots, k_K)$ be a permutation of $(1, 2, \ldots, K)$ such that $V^{k_1}_n(t)  \geq  V^{k_2}_n(t) \geq \cdots \geq V^{k_K}_n(t)$.  Let $i_ n = \lfloor L_n/z \rfloor$.  Then for each $n \in {\cal N}$, choose
\begin{align}
s_n^{k}(t)
=&\begin{cases} 1,
&  k\in \mathcal \{k_1,\cdots, k_{i_n}\}\\
0, &\text{otherwise}
\end{cases}\label{eqn:greedy-VIP}
\end{align}
Thus, the objects with the highest VIP counts (the highest local popularity) are cached.

It is important to note that both the backpressure-based forwarding algorithm and the max-weight caching algorithm are {\em distributed.}  To implement the forwarding algorithm, each node must exchange its VIP queue state with only its neighbors.  The implementation of the caching algorithm is local once the updated VIP queue state has been obtained.

To characterize the implementation complexity of Algorithm 1, we note that both the computational and communication complexity of the back pressure forwarding algorithm per time slot is $O(N^2 K)$, where the bound can be improved to $O(NDK)$ if $D$ is the maximum node degree in the network.  Assuming fixed cache sizes, the computational complexity of the caching algorithm per time slot can be found to be $O(NK)$.

In the following section, we show that the forwarding and caching strategy described in Algorithm~\ref{Alg:VIP} is {\em throughput optimal} within the virtual plane, in the sense of maximizing the throughput of VIPs in the network ${\cal G} = ({\cal N}, {\cal L})$ with appropriate transmission rate constraints.


\subsection{Maximizing VIP Throughput}

We now show that Algorithm~\ref{Alg:VIP} adaptively maximizes the throughput of VIPs in the network ${\cal G} = ({\cal N}, {\cal L})$ with appropriate transmission rate constraints.
In the following, we assume that (i) the VIP arrival processes $\{A^k_n(t); t=1,2,\ldots\}$  are mutually independent with respect to $n$ and $k$; (ii)
for all $n \in {\cal N}$ and $k \in {\cal K}$, $\{A^k_n(t); t = 1, 2, \ldots\}$ are i.i.d. with respect to $t$; and (iii) for all $n$ and $k$, $A^k_n(t)\leq A^k_{n,\max}$ for all $t$.

To determine the constraints on the VIP transmission rates $\mu^{k}_{ab}(t)$, we note that
Data Packets for the requested data object must travel on the reverse path taken by the Interest Packets.  Thus, in determining the transmission of the VIPs, we take into account the link capacities on the reverse path as follows:
\begin{align}
&\sum_{k \in \mathcal K} {\mu^{k}_{ab}(t)} \leq C_{ba}/z,\
\text{for~all}~(a,b) \in \mathcal L\label{eqn:rout_cost_sum}\\
&\mu^{k}_{ab}(t)=0, \;\text{for~all}~(a,b)\not\in \mathcal
L^{k}\label{eqn:rout_cost_non_neg}
\end{align}
where $C_{ba}$ is the capacity of ``reverse" link $(b,a)$.

\subsubsection{VIP Stability Region}

To present the throughput optimality argument,
we first define the VIP stability region.
The VIP queue at node $n$ is  {\em stable} if
$$ \limsup_{t \rightarrow \infty}
\frac{1}{t} \sum_{\tau = 1}^{t} 1_{[V^k_{n}(\tau) > \xi]} d\tau
\rightarrow 0 \;\; \text{as} \;\; \xi \rightarrow \infty,$$ where $1_{\{\cdot\}}$
is the indicator function.
The {\em VIP network stability region} $\Lambda$ is the closure
of the set of all VIP arrival rates $(\lambda^k_n)_{k \in {\cal K}, n \in {\cal N}}$ for which there exists some feasible joint forwarding and caching policy which can guarantee that all VIP queues are stable.  By feasible, we mean that at each time $t$, the policy specifies a forwarding rate vector $(\mu^k_{ab}(t))_{k \in {\cal K}, (a,b) \in {\cal L}}$ satisfying~\eqref{eqn:rout_cost_sum}-\eqref{eqn:rout_cost_non_neg}, and a caching vector $(s^k_n(t))_{k \in {\cal K}, n \in {\cal N}}$ satisfying the cache size limits $(L_n)_{n \in {\cal N}}$.

{The following theorem characterizes the VIP stability region in the virtual plane (or equivalently the IP stability region in the actual plane when there is no collapsing or suppression at the PITs), under the assumption that at each time $t$, each node $n \in {\cal N}$ can gain access to any data object $k \in {\cal K}$ and cache the object locally.} To our knowledge, Theorem~\ref{thm:stability} is the first instance where the effect of caching has been fully incorporated into the stability region of a multi-hop network.

\begin{Thm} [VIP Stability Region] The VIP stability region of the network ${\cal G} = ({\cal N}, {\cal L})$ with link capacity constraints~\eqref{eqn:rout_cost_sum}-\eqref{eqn:rout_cost_non_neg}, and with VIP queue evolution~\eqref{eqn:queue_dyn}, is the set $\Lambda$ consisting of all VIP arrival rates $(\lambda_n^k)_{k \in {\cal K}, n \in {\cal N}}$  such that there exist flow variables $(f_{ab}^k)_{k \in {\cal K}, (a,b) \in {\cal L}}$ and storage variables $(\beta_{n,i,l})_{ n \in {\cal N}; i=1,\cdots, {K \choose l};\  l=0,\cdots, i_n \triangleq \lfloor L_n/z\rfloor}$
satisfying
\begin{align}
&f_{ab}^k\geq 0, \ f_{nn}^k=0, \  f_{src(k)n}^k=0, \quad  \forall a,b, n\in \mathcal N, \ k\in \mathcal K\label{eqn:stability-region-f1}\\
&f_{ab}^k=0, \quad \forall a,b\in \mathcal N, \ k\in \mathcal K,\  (a,b)\not \in \mathcal L^k\label{eqn:stability-region-f2}\\
&0\leq \beta_{n,i,l}\leq 1, \; i=1,\cdots, {K \choose l},\ l=0,\cdots, i_n, \ n\in \mathcal N\label{eqn:stability-region-beta}\\
&\lambda_n^k\leq \sum_{b\in \mathcal N}f^k_{nb}-\sum_{a\in \mathcal N}f^k_{an}+r_n\sum_{l=0}^{i_n}\sum_{i=1}^{{K \choose l}}\beta_{n,i,l}\mathbf 1[k\in \mathcal B_{n,i,l}], \nonumber \\
&\hspace{35mm} \forall n\in \mathcal N,\ k\in \mathcal K, n\neq src(k) \label{eq:sink} \\
&\sum_{k\in \mathcal K} f^k_{ab}\leq C_{ba}/z, \quad\forall (a,b)\in \mathcal L\label{eqn:stability-region-capacity}\\
&\sum_{l=0}^{i_n}\sum_{i=1}^{{K \choose l}} \beta_{n,i,l}=1, \quad\forall n\in \mathcal N\label{eqn:stability-region-cache}
\end{align}
Here,
$\mathcal B_{n,i,l}$ denotes the caching set consisting of the $i$-th combination of $l$ data objects out of $K$ data objects at node $n$, where $i=1,\cdots, {K \choose l}$, $l=0,\cdots, i_n \triangleq\lfloor L_n/z\rfloor$.
\label{thm:stability}
\end{Thm}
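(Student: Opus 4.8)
The plan is to prove the two inclusions $\Lambda\subseteq\Lambda'$ and $\Lambda'\subseteq\Lambda$, where $\Lambda'$ denotes the set of rate vectors $(\lambda_n^k)$ admitting flow variables $(f_{ab}^k)$ and storage variables $(\beta_{n,i,l})$ satisfying \eqref{eqn:stability-region-f1}--\eqref{eqn:stability-region-cache}. The argument follows the Lyapunov-drift methodology for network stability regions (Tassiulas-Ephremides \cite{Tassiulas-Ephremides:1992-AC} and its multi-commodity extensions), using assumptions (i)--(iii); the only genuinely new element is the handling of the caching ``sink'' term $r_n s_n^k(t)$, which couples the commodities at a node through $\sum_k s_n^k(t)\le i_n$.

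\textbf{Necessity ($\Lambda\subseteq\Lambda'$).} Fix $(\lambda_n^k)\in\Lambda$ and a feasible forwarding/caching policy under which every VIP queue is stable. Rewrite \eqref{eqn:queue_dyn} in conservation form along a sample path, with actual per-slot transfers $g_{ab}^k(t)\in[0,\mu_{ab}^k(t)]$ and actual caching drains $h_n^k(t)\in[0,r_n s_n^k(t)]$, so that $V_n^k(t+1)=V_n^k(t)+A_n^k(t)+\sum_a g_{an}^k(t)-\sum_b g_{nb}^k(t)-h_n^k(t)\ge0$ (the inequality in \eqref{eqn:queue_dyn} being exactly the relaxation that permits $g<\mu$). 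Summing over $t=1,\dots,T$, dividing by $T$, and passing to a subsequence along which all time averages converge (possible by the boundedness in (iii)), the stability hypothesis together with the bounded per-slot increments forces $V_n^k(T+1)/T\to0$; since $V_n^k(1)=0$ this yields, for every $n\ne src(k)$, the identity $\lambda_n^k=\overline{\sum_b g_{nb}^k}-\overline{\sum_a g_{an}^k}+\overline{h_n^k}$. Set $f_{ab}^k\triangleq\overline{g_{ab}^k}$: then \eqref{eqn:stability-region-f1}--\eqref{eqn:stability-region-f2} hold by feasibility and the convention $V_{src(k)}^k\equiv0$, while $\sum_k f_{ab}^k\le\overline{\sum_k\mu_{ab}^k}\le C_{ba}/z$ gives \eqref{eqn:stability-region-capacity}. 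For the caching term, at each $t$ the vector $(s_n^k(t))_{k\in\mathcal K}$ equals the indicator $\mathbf 1_{\mathcal B_{n,i(t),l(t)}}$ of one admissible caching set ($l\le i_n$); letting $\beta_{n,i,l}$ be the long-run fraction of slots in which node $n$ uses $\mathcal B_{n,i,l}$ gives \eqref{eqn:stability-region-beta}, \eqref{eqn:stability-region-cache}, and $\overline{s_n^k}=\sum_{l,i}\beta_{n,i,l}\mathbf 1[k\in\mathcal B_{n,i,l}]$; combined with $\overline{h_n^k}\le r_n\overline{s_n^k}$, the identity above becomes the inequality \eqref{eq:sink}. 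Hence $(\lambda_n^k)\in\Lambda'$.

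\textbf{Sufficiency ($\Lambda'\subseteq\Lambda$).} First take $(\lambda_n^k)\in\Lambda'$ for which \eqref{eq:sink} holds with a uniform slack $\epsilon>0$ on its right-hand side. Consider the stationary randomized policy that puts $\mu_{ab}^k(t)=f_{ab}^k$ on every link (feasible by \eqref{eqn:stability-region-f2} and \eqref{eqn:stability-region-capacity}) and, independently across nodes and slots, caches $\mathcal B_{n,i,l}$ at node $n$ with probability $\beta_{n,i,l}$, so $\mathbb E[r_n s_n^k(t)]=r_n\sum_{l,i}\beta_{n,i,l}\mathbf 1[k\in\mathcal B_{n,i,l}]$. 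Take $L(\mathbf V)=\tfrac12\sum_{n,k}(V_n^k)^2$; squaring \eqref{eqn:queue_dyn}, bounding the nested positive parts via $((u)^++v)^2\le u^2+v^2+2(u)^+v$ with $u=V_n^k(t)-\sum_b\mu_{nb}^k(t)$ and $v=A_n^k(t)+\sum_a\mu_{an}^k(t)-r_n s_n^k(t)$, taking conditional expectations, and using the slack yields a one-slot drift bound $\mathbb E[L(\mathbf V(t+1))-L(\mathbf V(t))\mid\mathbf V(t)]\le B-\epsilon\sum_{n,k}V_n^k(t)$ with $B$ finite (depending only on $N,K$, $\max C_{ab}/z$, the $A^k_{n,\max}$, and $\max_n r_n$). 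The Lyapunov drift theorem then gives $\limsup_T\tfrac1T\sum_{t\le T}\mathbb E[\sum_{n,k}V_n^k(t)]<\infty$, which implies stability in the sense defined above, so $(\lambda_n^k)\in\Lambda$. For a general $(\lambda_n^k)\in\Lambda'$, observe that since $C_{ab}>0$ for all links there is a strictly interior rate vector $\lambda^*\in\Lambda'$ (route tiny over-provisioned flows toward each source and cache nothing); for each $\delta\in(0,1)$ the mixture $(1-\delta)(\lambda_n^k)+\delta\lambda^*$, with the correspondingly mixed $(f,\beta)$, satisfies \eqref{eq:sink} with strictly positive slack, hence lies in $\Lambda$; letting $\delta\downarrow0$ and using that $\Lambda$ is closed gives $(\lambda_n^k)\in\Lambda$.

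\textbf{Main obstacle.} The backpressure drift estimate is routine; the delicate point, and the only place the caching structure really matters, is twofold: in the necessity direction, recognizing that the time-averaged caching indicator $(\overline{s_n^k})_{k}$ is automatically a convex combination of the indicators of the finitely many admissible caching sets $\mathcal B_{n,i,l}$, which is what produces the variables $\beta_{n,i,l}$ and the precise form of \eqref{eq:sink}; and in the sufficiency direction, carrying the term $-r_n s_n^k(t)$ correctly through the squared recursion, since in \eqref{eqn:queue_dyn} the caching drain acts on the VIPs remaining \emph{after} the link departures, so a careless expansion of the nested $(\cdot)^+$ either double-counts or entirely cancels its contribution to the negative drift.
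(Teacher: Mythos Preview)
Your proposal is correct and follows the same two-part skeleton as the paper's proof (necessity via time-averaging a stabilizing policy, sufficiency via a stationary randomized policy realizing the flow/storage variables), but with two noteworthy differences in technique. For sufficiency, you run a quadratic Lyapunov drift argument on the randomized policy to obtain strong stability; the paper instead shows that under the randomized policy each queue has service rate strictly exceeding its arrival rate and invokes Loynes' theorem~\cite{Loynes:62}. Your route is arguably more self-contained here (indeed it is exactly the machinery the paper deploys for Theorem~\ref{Thm:thpt-opt}), while the paper's route is shorter but leans on Loynes being applicable queue-by-queue. Second, your closing mixture argument to reach boundary points of $\Lambda'$ is unnecessary once you recall that the stability region is \emph{defined} as a closure: the paper simply shows $\{\text{stabilizable}\}\subseteq\Lambda'$ and $\mathrm{int}(\Lambda')\subseteq\{\text{stabilizable}\}$, and since $\Lambda'$ is closed convex with nonempty interior this already gives equality of $\Lambda'$ with the closure of the stabilizable set.

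One step in your necessity argument is under-justified: from the paper's stability definition (the time-average of $1[V_n^k(\tau)>\xi]$ vanishes as $\xi\to\infty$) it is not immediate that $V_n^k(T+1)/T\to 0$ along an arbitrary subsequence on which the empirical averages converge. The paper handles this by invoking Lemma~1 of~\cite{Neely-Modiano-Rohrs:2005}, which says stability implies the existence of a finite $M$ with $V_n^k(t)\le M$ infinitely often; one then picks $\tilde t$ in that set with $\tilde t$ large enough that the empirical arrival rates are within $\epsilon$ of $\lambda_n^k$. You should either cite that lemma or insert the short argument that bounded increments plus your stability notion force $\liminf_T V_n^k(T)/T=0$, and then pass to a further subsequence.
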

\begin{proof} Please refer to Appendix A. 
\end{proof}

{To interpret Theorem~\ref{thm:stability}, note that the flow variable $f_{ab}^k$ represents the long-term VIP flow rate for data object $k$ over link $(a,b)$.  The
storage variable $\beta_{n,i,l}$ represents the long-term fraction of time that the set $\mathcal B_{n,i,l}$ (the $i$-th combination of $l$ data objects out of $K$ data objects) is cached at node $n$.  Inequality~\eqref{eq:sink} states that the (exogenous) VIP arrival rate for data object $k$ at node $n$ is upper bounded by the total long-term outgoing VIP flow rate minus the total (endogenous) long-term incoming VIP flow rate, plus the long-term VIP flow rate which is absorbed by all possible caching sets containing data object $k$ at node $n$, weighted by the fraction of time each caching set is used.}

\subsubsection{Throughput Optimality}

By definition, if the VIP arrival rates ${\boldsymbol \lambda} =
(\lambda^k_{n})_{k\in {\cal K}, n \in {\cal N}}$ $\in {\rm
int}({\Lambda})$, then all VIP queues can be stabilized. In general,
however, this may require knowing the value of ${\boldsymbol \lambda}$.  In
reality, ${\boldsymbol \lambda}$ can be learned only over time, and may be
time-varying.  Moreover, stabilizing the network given an arbitrary VIP arrival rate in the interior of $\Lambda$
may require (time sharing among) multiple forwarding and caching policies.

We now show that the joint forwarding and caching policy in Algorithm~\ref{Alg:VIP} adaptively stabilizes all VIP queues in the network for any ${\boldsymbol \lambda} \in {\rm
int}({\Lambda})$, without knowing ${\boldsymbol \lambda}$.  Thus, the policy is {\em throughput optimal}, in the sense of adaptively maximizing the VIP throughput, and therefore the user demand rate satisfied by the network.

\begin{Thm} [Throughput Optimality] If there exists  $\boldsymbol \epsilon=(\epsilon_n^k)_{n \in {\cal N}, k \in {\cal K}} \succ \mathbf 0$
such
that $\boldsymbol \lambda+\boldsymbol \epsilon \in \Lambda$,
then the network of VIP queues under
Algorithm \ref{Alg:VIP} satisfies
\begin{align}
\limsup_{t\to\infty}\frac{1}{t}\sum_{\tau=1}^{t}\sum_{n\in \mathcal N,k\in \mathcal K} \mathbb
E[V^k_n(\tau)]\leq \frac{N B}{\epsilon}\label{eqn:enhanced-DBP-VB}
\end{align}
where
$ B\triangleq \frac{1}{2N}\sum_{n\in \mathcal N}\big((\mu^{out}_{n,
\max})^2+(A_{n,\max}+\mu^{in}_{n,\max}+r_{n,\max})^2
+2\mu^{out}_{n,
\max}r_{n,\max}\big)$,
$\epsilon\triangleq \min_{ n\in \mathcal
N,k\in \mathcal K} \epsilon_n^k$,
$\mu^{in}_{n, \max}\triangleq \sum_{a\in \mathcal N} C_{an}/z $, $\mu^{out}_{n, \max}\triangleq
\sum_{b\in \mathcal N} C_{nb}/z$, $
A_{n,\max} \triangleq \sum_{k\in \mathcal K}
A^k_{n,\max}$, and $r_{n,\max}= K r_n$.\label{Thm:thpt-opt}
\end{Thm}
\begin{proof} Please refer to~\cite{viparxiv}.
\end{proof}

The forwarding and caching policy in Algorithm~\ref{Alg:VIP} achieves throughput optimality in the virtual plane by exploiting both the bandwidth and storage resources of the network to maximally balance out the VIP load (or the demand for data objects in the network), thereby preventing the buildup of congestion.
Equivalently,  Algorithm~\ref{Alg:VIP} is throughput optimal in the actual plane when Interest Packets are not collapsed or suppressed.
Note that Theorem~\ref{Thm:thpt-opt} can be seen as the multi-hop generalization of the throughput optimal result in~\cite{CachingTM2011:Ying}.
%

%

%

\subsection{Forwarding and Caching in the Actual Plane}
\label{sec:actual}


We now focus on the development of forwarding and caching policies for the actual plane, based on the throughput optimal policies of Algorithm~\ref{Alg:VIP} for the virtual plane.
Forwarding and caching in the actual plane take advantage of the exploration in the virtual plane to forward Interest Packets on profitable routes and cache Data Packets at profitable node locations.

\subsubsection{Forwarding of Interest Packets}

\label{sec:forwarding_actual}

The forwarding of Interest Packets in the actual plane follows the pattern established by the VIPs under Algorithm~\ref{Alg:VIP} in the virtual plane.
For a given window size $T$, let
\begin{align}   \bar{\nu}^k_{ab}(t) = \frac{1}{T} \sum_{t' = t-T+1}^t \nu^k_{ab}(t') \label{eq:window}
\end{align}
be the average number of VIPs for object $k$ {\em transmitted} over link $(a,b)$ over a sliding window of size $T$ under Algorithm~\ref{Alg:VIP} prior to time slot $t$.\footnote{Note that
the number $\nu^k_{ab}(t)$ of VIPs for object $k$ transmitted over link $(a,b)$ during time slot $t$ may not be the same as the allocated transmission rate $\mu^k_{ab}(t)$.  $\nu^k_{ab}(t)$ may be less than $\mu^k_{ab}(t)$ if there are few VIPs waiting to be transmitted.}


\textbf{Forwarding}:
At any node $n \in {\cal N}$, Interest Packets for all data objects share one queue and are served on a First-Come-First-Serve basis.
Suppose that the head-of-the-queue Interest Packet at node $n$ at time $t$ is an interest for the {\em starting chunk} of data object $k$.  If (i)
node $n$ has not yet received a request for data object $k$, or if the last type-$k$ data chunk in the last Data Packet received at node $n$ prior to $t$ is the {\em ending chunk} of object $k$, {\em and} if (ii)  there is no PIT entry at node $n$ for any chunk of data object $k$, then
forward the Interest Packet to node
\begin{align}
b_n^k(t) \in \arg \max_{\{b:(n,b)\in \mathcal L^k\}}\bar{\nu}^k_{nb}(t).\label{eqn:forwarding-RIP}
\end{align}
That is, the Interest Packet is forwarded on the link with the maximum average object-$k$ VIP flow rate over a sliding window of size $T$ prior to $t$, under Algorithm~\ref{Alg:VIP}.  This latter link is a ``profitable" link for forwarding the Interest Packet at time slot $t$, from the standpoint of reducing delays and congestion.
If either condition (i) or (ii) does not hold, then forward the Interest Packet on the link used by node $n$ to forward the most recent Interest Packet for a chunk of object $k$.\footnote{The router nodes need not know the names of the starting chunk and ending chunk beforehand.  These names can be learned as the routers forward Interest Packets and receive Data Packets for the popular data objects.  Before the names of the starting and ending chunks are learned, Interest Packets for the data object can be forwarded using a simple technique such as the shortest path algorithm.}

If the head-of-the-queue Interest Packet at node $n$ at time $t$ is an interest for a chunk of data object $k$ which is not the starting chunk, then forward the Interest Packet on the link used by node $n$ to forward the most recent Interest Packet for a chunk of object $k$.

The above forwarding algorithm ensures that a new request for data object $k$ (which does not overlap with any ongoing request for object $k$) at time $t$ is forwarded on the link with the maximum average object-$k$ VIP flow rate over a sliding window of size $T$ prior to $t$.  At the same time, the algorithm ensures that an ongoing request for data object $k$ keeps the same outgoing link from node $n$.  This ensures that in the actual plane, all the Interest Packets for an ongoing request for data object $k$ are forwarded on the same path toward a caching point or content source for data object $k$.
As a direct result, the Data Packets for all chunks for the same ongoing request for data object $k$ take the same reverse path through the network.

Note that the Interest Packets for non-overlapping requests for data object $k$ can still be forwarded on different paths, since the quantity $b_n^k(t)$ can vary with~$t$.  Thus, the forwarding of data object requests is inherently {\em multi-path} in nature.

It can be seen that the computational complexity (per time slot) of both the averaging operation in~\eqref{eq:window} and the link selection operation in~\eqref{eqn:forwarding-RIP} is $O(N^2K)$.  Thus, the complexity of forwarding (per time slot) in the actual plane is $O(N^2K)$.



\subsubsection{Caching of Data Packets}

As mentioned in Section~\ref{sec:vipframe}, within the instantiations of the VIP framework we consider, the caching algorithm in the actual plane coincides with the caching algorithm in the virtual plane.  Thus, in the current context, the caching algorithm for the actual plane is the same as that described in~\eqref{eqn:greedy-VIP}.  Thus, at each time slot $t$, the data objects with the highest VIP counts (highest local popularity) are cached locally.\footnote{For practical implementation in the actual plane, we cannot assume that at each time, each node can gain access to the data object with the highest local popularity for caching.  Instead, one can use a scheme similar to that discussed in Section~\ref{sec:algorithm2caching}, based on comparing the VIP count of the data object corresponding to a Data Packet received at a given node to the VIP counts of the data objects currently cached at the node.}

In attempting to implement the caching algorithm in~\eqref{eqn:greedy-VIP}, however, we encounter a problem. Since the VIP count of a data object is decremented by $r_n$ immediately after the caching of the object at node $n$, the strategy in~\eqref{eqn:greedy-VIP} exhibits oscillatory caching behavior, whereby data objects which are cached are shortly after removed from the cache again due to the VIP counts of other data objects now being larger.  Thus, even though Algorithm~\ref{Alg:VIP} is throughput optimal in the virtual plane, its mapping to the actual plane leads to policies which are difficult to implement in practice.

In the next section, we demonstrate another instantiation of the VIP framework yielding a forwarding and caching policy for the actual plane, which has more stable caching behavior.

\section{Stable Caching VIP Algorithm}
\label{sec:forwarding-caching-stable}


In this section, we describe a practical VIP algorithm, called Algorithm 2, that looks for a stable solution in which the cache contents do not cycle in steady-state. Although Algorithm 2 is not theoretically optimal in the virtual plane, we show that it leads to significant performance gains in simulation experiments.

\subsection{Forwarding of Interest Packets}\label{sec:forwarding-caching-RIP}

The forwarding algorithm in the virtual plane for Algorithm 2 coincides with the backpressure-based forwarding scheme described in
\eqref{eqn:forwarding-VIP}-\eqref{eqn:weight} for Algorithm 1.   The forwarding of Interest Packets in the actual plane for Algorithm 2 coincides with the forwarding scheme described in~\eqref{eqn:forwarding-RIP}.
That is, all the Interest Packets for a particular request for a given data object are forwarded on the link with the maximum average VIP flow rate over a sliding window of size $T$ prior to the arrival time of the Interest Packet for the first chunk of the data object.

\subsection{Caching of Data Packets}

\label{sec:algorithm2caching}

The caching decisions are based
on the VIP flow in the virtual plane.
Suppose that at time slot $t$, node $n$ receives the Data Packet containing the first chunk of data object $k_{new}$ which is not currently cached at node $n$.   If there is sufficient unused space in the cache of node $n$ to accommodate the Data Packets of all chunks of object $k_{new}$, then node $n$ proceeds to cache the Data Packet containing the first chunk of data object $k_{new}$
as well as the Data Packets containing all subsequent chunks for data object $k_{new}$ (which, by the forwarding algorithm in Section~\ref{sec:forwarding_actual}, all take the same reverse path through node $n$).  That is, the entire data object $k$ is cached at node $n$.
Otherwise, the node compares the {\em cache scores} for $k_{new}$ and the currently cached objects, as follows.  For a given window size $T$, let the cache score for object $k$ at node $n$ at time $t$ be
\begin{equation}
  CS^k_n(t) = \frac{1}{T} \sum_{t' = t-T+1}^t \sum_{(a,n) \in {\cal L}^k}\nu^k_{an}(t')=\sum_{(a,n)\in {\cal L}^k}\bar{\nu}^k_{an}(t),  \label{eq:cache_score}
\end{equation}
i.e., the average number of VIPs for object $k$ {\em received} by node $n$ over a sliding window of size $T$ prior to time slot $t$.
Let ${\cal K}_{n,old}$ be the set of objects that are currently cached at node $n$.  Assuming that all data objects are of equal size, let $k_{min} \in {\cal K}_{n,old}$ be a current cached object with the smallest cache score.  If $k_{new}$ has a lower cache score than $k_{min}$, then object $k_{min}$ (consisting of all chunks) is evicted and replaced with object $k_{new}$. Otherwise, the cache is unchanged. If objects have different sizes, the optimal set of objects is chosen to maximize the total cache score under the cache space constraint. This is a knapsack problem for which low complexity heuristics exist.

At each time $t$, the VIP count at node $n$ for object $k$ is decreased by $r_n s_n^{k}(t)$ due to the caching at node $n$.
{This has the effect of attracting the flow of VIPs for each object $k\in\mathcal K_{n,new}$, where ${\cal K}_{n,new}$ denotes the new set of cached objects, to node $n$.}  

{The Data Packets for data objects evicted from the cache are potentially cached more efficiently elsewhere} (where the demand for the evicted data object is relatively bigger).   This is realized as follows: before the data object is evicted, VIPs and Interest Packets flow toward the caching point as  it is a sink for the  object.  After eviction, the VIP count would begin building up since the VIPs would not exit at the caching point.  As the VIPs build further, the backpressure load-balancing forwarding policy would divert them away from the current caching point to other parts of the network.

We now find the caching complexity for Algorithm 2.  Note that the complexity of calculating the cache scores (per time slot) in~\eqref{eq:cache_score} is $O(N^2K)$.  Due to link capacity constraints, the number of new data objects which arrive at a given node in a time slot is upper bounded by a constant.  Thus, for fixed cache sizes, the total computational complexity for the cache replacement operation (per time slot) is $O(N)$.  In sum, the caching complexity for Algorithm 2 per time slot is $O(N^2K)$.

\begin{figure*}[t]
\begin{center}
\includegraphics[width=17.4cm, height=8.4cm]{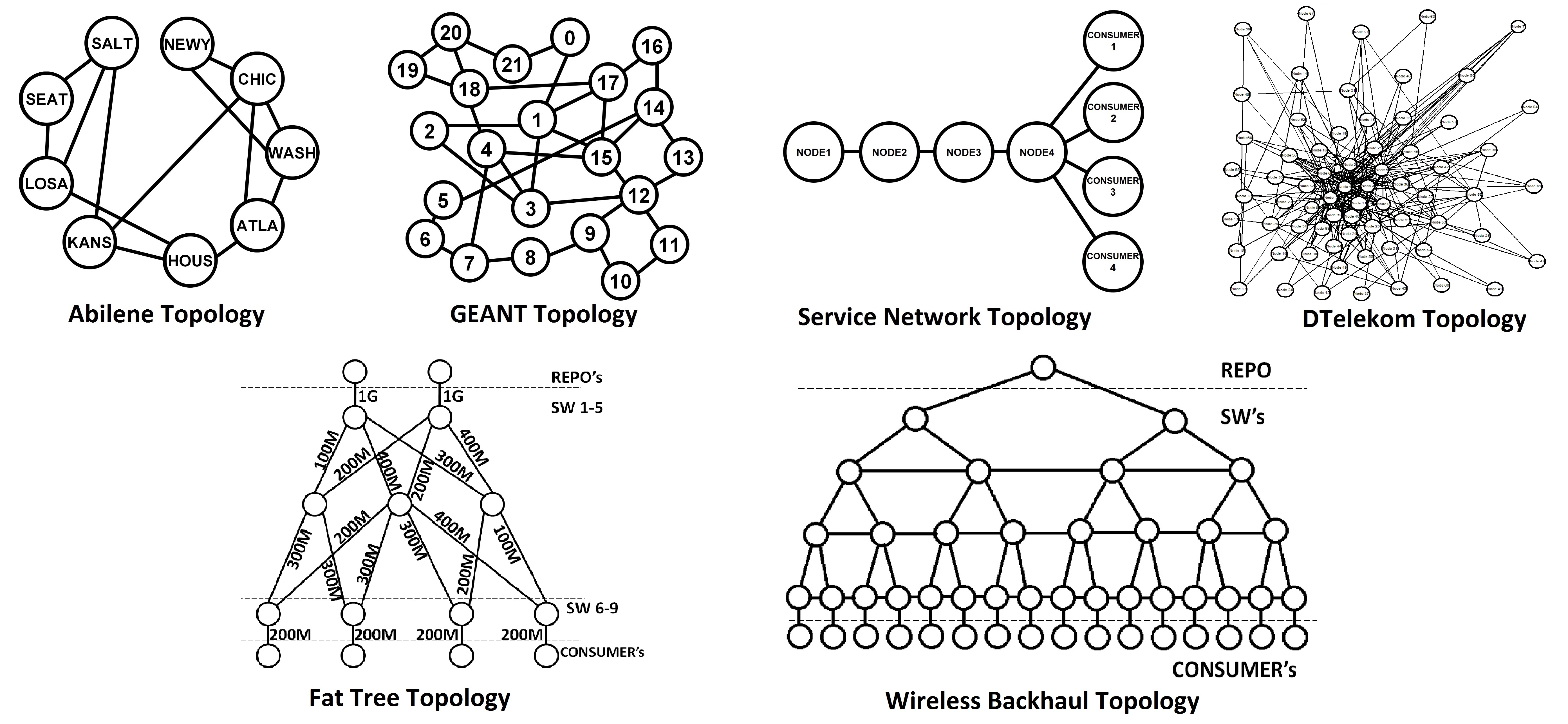}
\caption{Network Topologies}
\end{center}
\label{fig:topology}
\end{figure*}

\section{VIP Congestion Control}\label{sec:congestion-VIP}

Even with optimal forwarding and caching, excessively large request rates can overwhelm a content-centric network with limited resources.  Developing effective, practical, and fair congestion control algorithms for content-centric networks is a new and important challenge.  The VIP framework can effectively address this challenge by optimally combining congestion control with forwarding and caching.
Note that there is a downward VIP ÒgradientÓ from entry points of the object requests to the content source and caching nodes.  Thus, a high VIP count at the request entry point provides a natural signal for congestion control.
Based on the approach in~\cite{Georgiadis-Neely-Tassiulas:2006}, we develop a scheme where newly arriving Interest Packets (equivalently VIPs) first enter transport-layer storage reservoirs before being admitted to network-layer queues by a congestion control algorithm.  For each object $k$ and entry node $n$, associate a non-decreasing and concave utility function $g^k_n(\cdot)$ with VIPs admitted into the network layer.\footnote{Note that various fairness notions can be reflected via appropriate choice of the utility functions.}  When exogenous VIP arrival rates are outside the stability region $\Lambda$, the goal is to support a portion of the VIPs which maximizes the sum of utilities (over all $k$ and $n$) while ensuring network-layer stability.  We  present a distributed, joint congestion control, forwarding, and caching algorithm which adaptively and dynamically uses network bandwidth and storage resources to maximize the aggregate utility of admitted request rates subject to network-layer stability, for all exogenous VIP arrival rates inside or outside the stability region, even without knowledge of the arrival rates.  The algorithm also yields a characteristic tradeoff between the attained aggregate utility and the overall network delay.


\subsection{Transport Layer and Network Layer VIP Dynamics}

Let $Q_{n,\max}^k$ and $Q_n^k(t)$ denote  the transport layer VIP buffer size and the VIP count for object $k$ at node $n$ at the beginning of slot $t$, respectively.  $Q_{n,\max}^k$ can be infinite or finite (possibly zero\footnote{When $Q_{n,\max}^k=0$,  $Q_n^k(t)=0$ for all $t$.}). Let $\alpha^k_n(t)\geq 0$ denote the amount of VIPs admitted to the network layer VIP queue of object $k$ at node $n$ from the transport layer VIP queue at slot $t$. We assume $\alpha^k_n(t)\leq \alpha^k_{n,\max}$, where $\alpha^k_{n,\max}$ is a   positive constant which limits the burstiness of the admitted VIPs to the network layer.  We have  the following time evolutions of the transport  and network layer  VIP counts:
\begin{align}
Q^k_n(t+1) = \min \left\{ \left(Q^k_n(t)-\alpha^k_n(t)\right)^+ +A^k_n(t), Q^k_{n,\max} \right\}
\label{eqn:queue_dyn-trans}
\end{align}
\begin{align}
& V^k_n(t+1) \leq \nonumber\\
&   \left(
\left(V^k_n(t)-\sum_{b\in \mathcal N}\mu^{k}_{nb}(t)\right)^+ +\alpha^k_n(t)
+\sum_{a\in \mathcal N}\mu^{k}_{an}(t)- r_n s_n^{k}(t)\right)^+\label{eqn:queue_dyn-flow}
\end{align}

\subsection{Congestion Control Algorithm}

The goal of the congestion control is to support a portion of the VIPs  which maximizes the sum of utilities when $\boldsymbol \lambda  \notin \Lambda$.  Let $g^k_n(\cdot)$ be the utility function associated with the VIPs into the network layer for object $k$ at node $n$.  Assume  $g^k_n(\cdot)$ is non-decreasing, concave, continuously differentiable and non-negative. Define a $\boldsymbol \theta$-optimal admitted VIP rate as follows:
\begin{align}
\overline {\boldsymbol \alpha}^*(\boldsymbol \theta)=\arg\max_{\overline{\boldsymbol \alpha}}\quad  & \sum_{n\in \mathcal N,k\in \mathcal K}g^k_n\left(\overline \alpha^k_n\right)\label{eqn:eps-opt-prob}\\
s.t. \quad &  \overline{\boldsymbol \alpha}+\boldsymbol \theta \in \Lambda\label{eqn:stable}\\
& \mathbf 0 \preceq \overline{\boldsymbol \alpha} \preceq \boldsymbol \lambda\label{eqn:demand}
\end{align}
where $ \overline{\boldsymbol \alpha}^*(\boldsymbol \theta)=(\overline \alpha^{k*}_n(\boldsymbol \theta))$, $ \overline{\boldsymbol \alpha}=(\overline \alpha^k_n)$ and $\mathbf 0\preceq\boldsymbol \theta=(\theta^k_n)\in \Lambda $.  The constraint in \eqref{eqn:stable} ensures that the admitted rate to the network layer is bounded away from the boundary of the network stability region by $\boldsymbol \theta$.
Due to the non-decreasing property of the utility functions, the maximum sum utility over all $\boldsymbol \theta$ is achieved at $\overline {\boldsymbol \alpha}^*(\mathbf 0)$ when  $\boldsymbol \theta=\mathbf 0$.

In the following, we develop a   joint congestion control, forwarding and caching algorithm that yields a throughput vector which can be arbitrarily close to the optimal solution $\overline {\boldsymbol \alpha}^*(\mathbf 0)$. We introduce auxiliary variables $\gamma_n^k(t)$ and the virtual queues $Y_n^k(t)$ for all $n\in \mathcal N$ and $k\in \mathcal K$.\footnote{Note that the congestion control part  of Algorithm \ref{Alg:flow-DBP} is the same as that in the traditional joint congestion control and DBP algorithm in \cite[page 90]{Georgiadis-Neely-Tassiulas:2006}. The difference lies in the forwarding  and caching part. We describe the congestion control part here for the purpose of completeness. }

\setcounter{Alg}{2}
\begin{Alg}  Initialize the virtual VIP count $Y^k_n(0)=0$ for all $n\in\mathcal N$ and $k\in \mathcal K$.
At the beginning of  each time slot $t$,   observe the network layer VIP counts $(V^k_n(t))_{k\in \mathcal K, n \in \mathcal N}$ and virtual VIP counts $(Y^k_n(t))_{k\in \mathcal K, n \in \mathcal N}$, and performs the following congestion control, forwarding
and caching in the virtual plane as follows.

\textbf{Congestion Control}: For each node $n$ and object $k$, choose the admitted VIP count at slot $t$, which also serves as the output rate of the corresponding virtual queue:
\begin{align}
\alpha^k_n(t)=
\begin{cases}
\min\left\{Q^k_n(t), \alpha^k_{n,\max}\right\},  & Y^k_n(t)>V^k_n(t)\\
0,& \text{otherwise}
\end{cases}\nonumber
\end{align}
Then, choose the auxiliary variable, which serves as the input rate  to the corresponding virtual queue:
\begin{align}
\gamma^k_n(t)=\arg\max_{\gamma}\quad & W g^k_n(\gamma)-Y^k_n(t)\gamma\label{eqn:solu-gamma}\\
s.t. \quad & 0\leq \gamma\leq \alpha^k_{n,\max}\nonumber
\end{align}
where $W>0$ is a control parameter which affects the utility-delay tradeoff of the algorithm.
Based on the chosen $\alpha^k_n(t)$ and $\gamma^k_n(t)$, the  transport
layer VIP count is updated according to \eqref{eqn:queue_dyn-trans} and the virtual VIP count is  updated according to:
\begin{align}
Y^k_n(t+1)=& \left(Y^k_n(t)-\alpha^k_n(t)\right)^+ +\gamma^k_n(t)\label{eqn:queue_dyn-virtual}
\end{align}

\textbf{Forwarding and Caching}:  Same as Algorithm \ref{Alg:VIP}. The network layer VIP count is updated according to \eqref{eqn:queue_dyn-flow}.\label{Alg:flow-DBP}
\end{Alg}

\subsection{Utility-Delay Tradeoff}

We now show that  for any control parameter $W>0$,  the joint congestion control, forwarding and caching policy in Algorithm~\ref{Alg:flow-DBP} adaptively stabilizes all VIP queues in the network ${\cal G} = ({\cal N}, {\cal L})$ for any ${\boldsymbol \lambda} \not\in {\rm
int}({\Lambda})$, without knowing ${\boldsymbol \lambda}$.   Algorithm~\ref{Alg:flow-DBP} yields a throughput vector which can be arbitrarily close to the optimal solution $\overline {\boldsymbol \alpha}^*(\mathbf 0)$ by letting $W\to \infty$. Similarly, in the following, we assume that the VIP arrival processes  satisfy (i) for all $n \in {\cal N}$ and $k \in {\cal K}$, $\{A^k_n(t); t = 1, 2, \ldots\}$ are i.i.d. with respect to $t$; (ii) for all $n$ and $k$, $A^k_n(t)\leq A^k_{n,\max}$ for all $t$.

\begin{Thm}[Utility-Delay Tradeoff of Alg. \ref{Alg:flow-DBP}] For an arbitrary VIP arrival rate  $\boldsymbol \lambda$ and for any control parameter $W>0$,  the  network of VIP queues under
Algorithm \ref{Alg:flow-DBP}  satisfies
\begin{align}
\limsup_{t\to\infty}\frac{1}{t}\sum_{\tau=1}^{t}\sum_{n\in \mathcal N,k\in \mathcal K} \mathbb
E[V^k_n(\tau)]&\leq \frac{2N\hat B+WG_{\max}}{
2\hat \epsilon}\label{eqn:enhanced-flow-DBP-U}
\end{align}
\begin{align}
\liminf_{t\to\infty}\sum_{n\in \mathcal N,k\in \mathcal K} g^k_n\left(\overline \alpha^k_n(t)\right)\geq &
\sum_{n\in \mathcal N,k\in \mathcal K}
g^{(c)}_n\left( \overline \alpha^{k*}_n\left(\mathbf 0\right)\right)\nonumber\\
&-\frac{2N\hat B}{W}
\label{eqn:enhanced-flow-DBP-g}
\end{align}
\begin{align}
\hat B\triangleq& \frac{1}{2N}\sum_{n\in \mathcal N}\Big((\mu^{out}_{n,
\max})^2+(\alpha_{n,\max}+\mu^{in}_{n,\max}+r_{n,\max})^2\nonumber\\
&\hspace{15mm} +2\mu^{out}_{n,
\max}r_{n,\max}\Big)\label{eqn:bound-B-flow}\\
\hat \epsilon \triangleq &\sup_{\{\boldsymbol \epsilon:  \boldsymbol \epsilon\in  \Lambda\}}\min_{ n\in \mathcal N, k\in \mathcal K}\left\{\epsilon_n^k\right\}\label{eqn:bound-e-flow}
\end{align}
with $\alpha_{n,\max} \triangleq  \sum_{k\in \mathcal K}
\alpha^k_{n,\max}$,   $G_{\max}\triangleq\sum_{n\in \mathcal N,k\in \mathcal K}
g^k_n\left(\alpha^k_{n,\max}\right)$, $\overline \alpha^k_n(t)\triangleq \frac{1}{t}\sum_{\tau=1}^t\mathbb E[ \alpha^k_n(\tau)]$.
\label{Thm:flow-control}
\end{Thm}
\begin{proof} Please refer to Appendix C.
\end{proof}

\section{Experimental Evaluation}
\label{sec:simulations}

This section presents the experimental evaluation of the proposed VIP Algorithms.
Experimental scenarios are carried on six network topologies: the Abilene Topology (9 nodes), the GEANT Topology (22 nodes), the Service Network Topology (8 nodes), the  DTelekom Topology (68 nodes), the Fat Tree Topology (15 nodes) and the Wireless Backhaul Topology (47 nodes), as shown in Figure~2.

In the Abilene, GEANT, and  DTelekom topologies, object requests can be generated by any node, and the content source for each data object is independently and uniformly distributed among all nodes.
In the Service Network topology, NODE 1 is the content source for all objects and requests can be generated only by the CONSUMER nodes. Caches are placed at NODE 2, NODE 3, NODE 4 and the CONSUMER nodes.
In the Fat Tree Topology, the content source for each data object is independently and uniformly distributed between the REPO nodes.  Requests are generated by  the CONSUMER nodes.
In the Wireless Backhaul topology, REPO is the content source for all data objects.  Requests are generated only by CONSUMER nodes.
At each node requesting data, object requests arrive according to a Poisson process  with an overall rate $\lambda$ (in requests/node/sec). Each arriving request requests data object $k$ (independently) with probability $p_k$, where $\{p_k\}$ follows a (normalized) Zipf distribution with parameter 0.75.
In the simulations, the Interest Packet size is 125 $B$; the Data Packet size is 50 $KB$; the data object size is 5 $MB$.
We assume dropping of Interest and Data Packets due to buffer overflow does not occur.  We do not consider PIT expiration timers and interest retransmissions.

In the virtual plane, we use $\Delta$ to denote the time slot length. 
For the proposed VIP algorithms, the  forwarding in the virtual plane uses the backpressure algorithm with a cost bias\footnote{The cost bias is calculated as the number of hops on the shortest path to the content source, and is added to the VIP queue differential. It can be shown that the cost-biased version  is also throughput optimal in the virtual plane, as in Theorem~\ref{Thm:thpt-opt}.} to help direct VIPs toward content sources, and
the window size $T$ in the actual plane is 5000 time slots.
Each simulation generates requests  for  100 $sec$.  Each curve in Figures~3-19 is obtained by averaging over 10 simulation runs. Our simulations are carried out on the Discovery cluster from Northeastern University.

\subsection{Experimental Evaluation of Algorithm 2}

In this section, we present experimental results on the performance of the Stable Caching VIP Algorithm (Algorithm 2). Experimental scenarios are carried on all six network topologies.

The cache sizes at all nodes are identical, and are chosen to be 5 $GB$ (1000 data objects) in the Abilene topology, and 2 $GB$ (400 data objects) in the GEANT and DTelekom topologies. In the Service Network, the cache sizes at NODE 2, NODE 3, NODE 4 and the CONSUMER nodes are 5 $GB$.
In the Fat Tree Topology, the cache sizes are 1 $GB$ (200 data objects) at SW 1-5 and 125 $MB$ (25 data objects) at SW 6-9.
In the Wireless Backhaul topology, the cache sizes at all SW nodes are 100 $MB$ (20 data objects)
In the virtual plane, $\Delta=200$ msec in the GEANT and DTelekom topologies, $\Delta=80$ msec in the Service Network and Abilene topologies and $\Delta=40$ msec in the Fat Tree and Wireless Backhaul topologies. Each simulation terminates when all Interest Packets are fulfilled.


Simulation experiments were carried out to compare the Stable Caching VIP Algorithm against a number of popular caching algorithms used in conjunction with shortest path forwarding and a potential-based forwarding algorithm.  In shortest path forwarding, at any given node, an Interest Packet for data object $k$ is forwarded on the shortest path to the content source for object $k$.\footnote{We assume that all chunks of a data object are cached together.}
The Data Packet corresponding to the Interest Packet may be retrieved from a caching node along the shortest path.
In potential-based forwarding, a potential value for each data object at each node is set as in~\cite{PotentialRouting2012}.
At each time and for each node, an Interest Packet for object $k$ is forwarded to the neighbor with the lowest current potential value for object $k$.
%
Each caching algorithm consists of two parts: caching decision and caching replacement.  Caching decision determines whether or not to cache a new data object when the first chunk of this object arrives and there is no remaining cache space.  If a node decides to cache the new data object, then caching replacement decides which currently cached data object should be evicted to make room for the new data object.  We considered the following caching decision policies: (i) Leave Copies Everywhere (LCE), which decides to cache all new data objects, and (ii) Leave a Copy Down (LCD)\cite{LCD:1395054}, where upon a cache hit for data object $k$ at node $n$, object $k$ is cached at the node which is one hop closer to the requesting node (while object $k$ remains cached at node $n$).
We considered  the following caching replacement policies: (i) Least Recently Used (LRU), which replaces the least recently requested data object,
(ii) UNIF, which chooses a currently cached data object for replacement, uniformly at random, and (iii) BIAS, which chooses two currently cached data objects uniformly at random, and then replaces the less frequently requested one.  In addition, we considered Least Frequently Used (LFU) and age-based caching~\cite{Age-based:6193504}. In LFU, the nodes record how often each data object has been requested and choose to cache the new data object if it is more frequently requested than the least frequently requested cached data object (which is replaced).  A window size of 5000 time slots is used for LFU.
In age-based caching~\cite{Age-based:6193504}, each cached object $k$ at node $n$ is assigned an age which depends on $p_k$, the (Zipf) popularity of object $k$, and the shortest-path distance between $n$ and $src(k)$.  The cache replacement policy replaces the cached  object for which the age has been exhausted the longest.
\begin{figure}[t]
\begin{center}
\vspace{-\baselineskip}
\centering
\includegraphics[height=4.8cm]{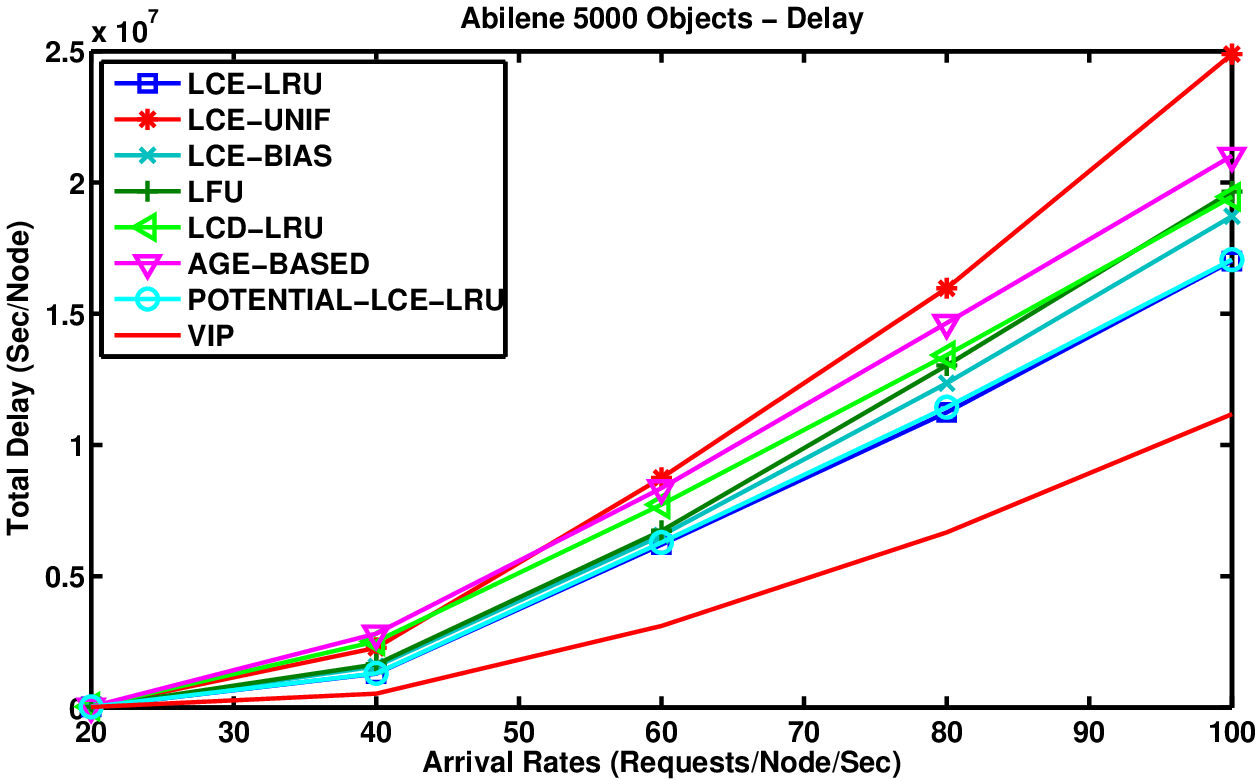}
\caption{Abilene Topology: Delay}

\vspace{10pt}
\centering
\includegraphics[height=4.8cm]{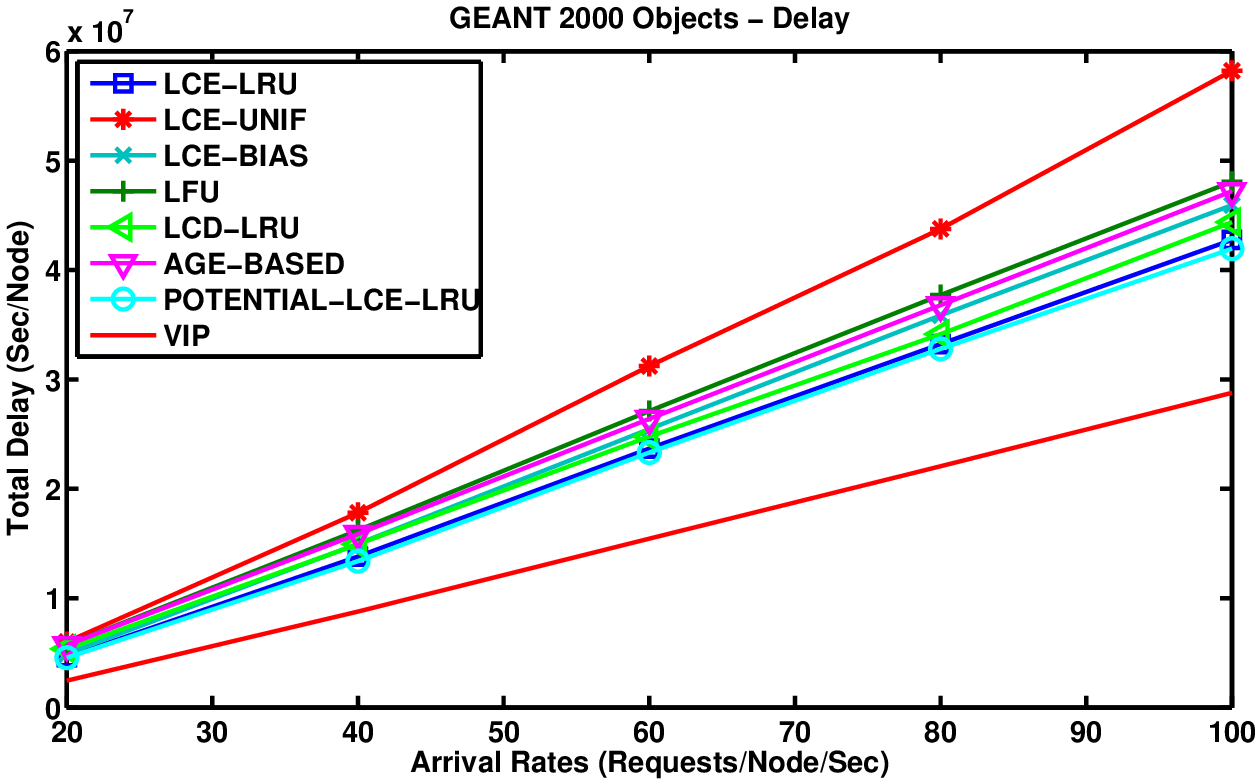}
\caption{GEANT Topology: Delay}

\vspace{10pt}
\centering
\includegraphics[height=4.8cm]{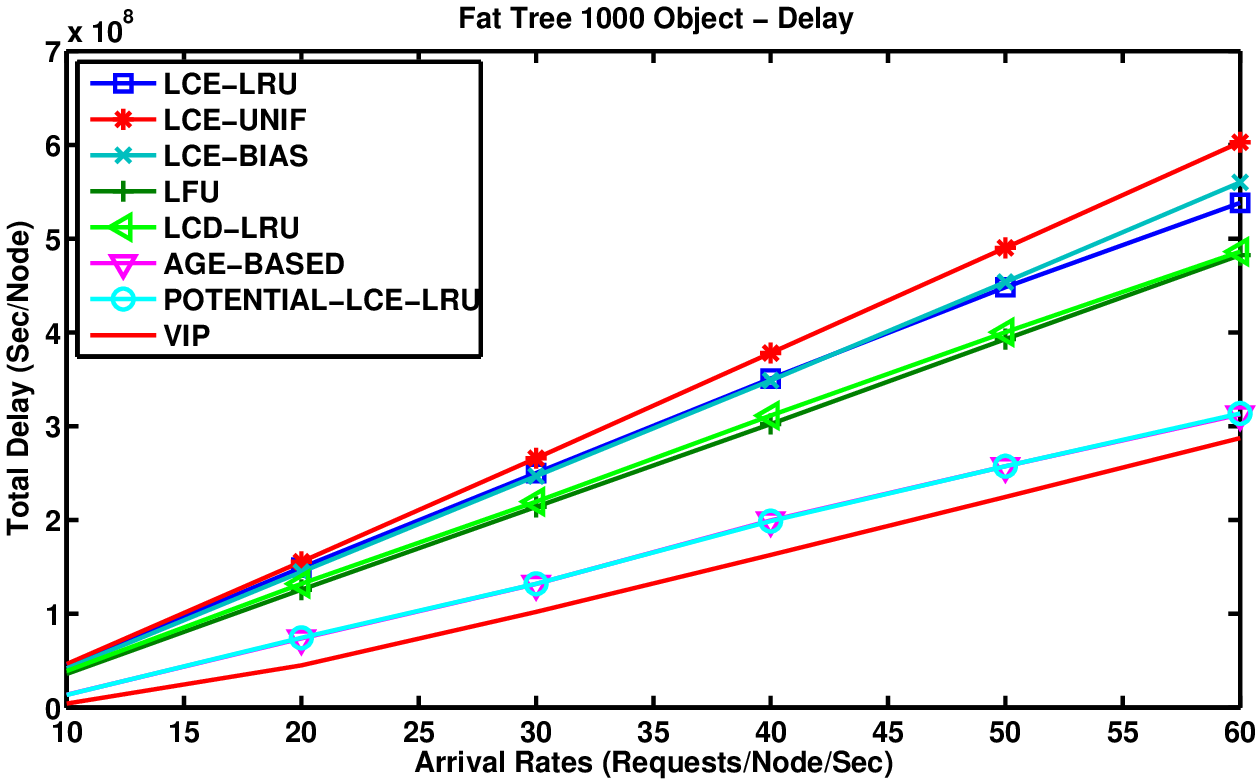}
\caption{Fat Tree Topology: Delay}
\end{center}
\end{figure}

\begin{figure}[t]
\begin{center}
\vspace{-\baselineskip}
\centering
\includegraphics[height=4.8cm]{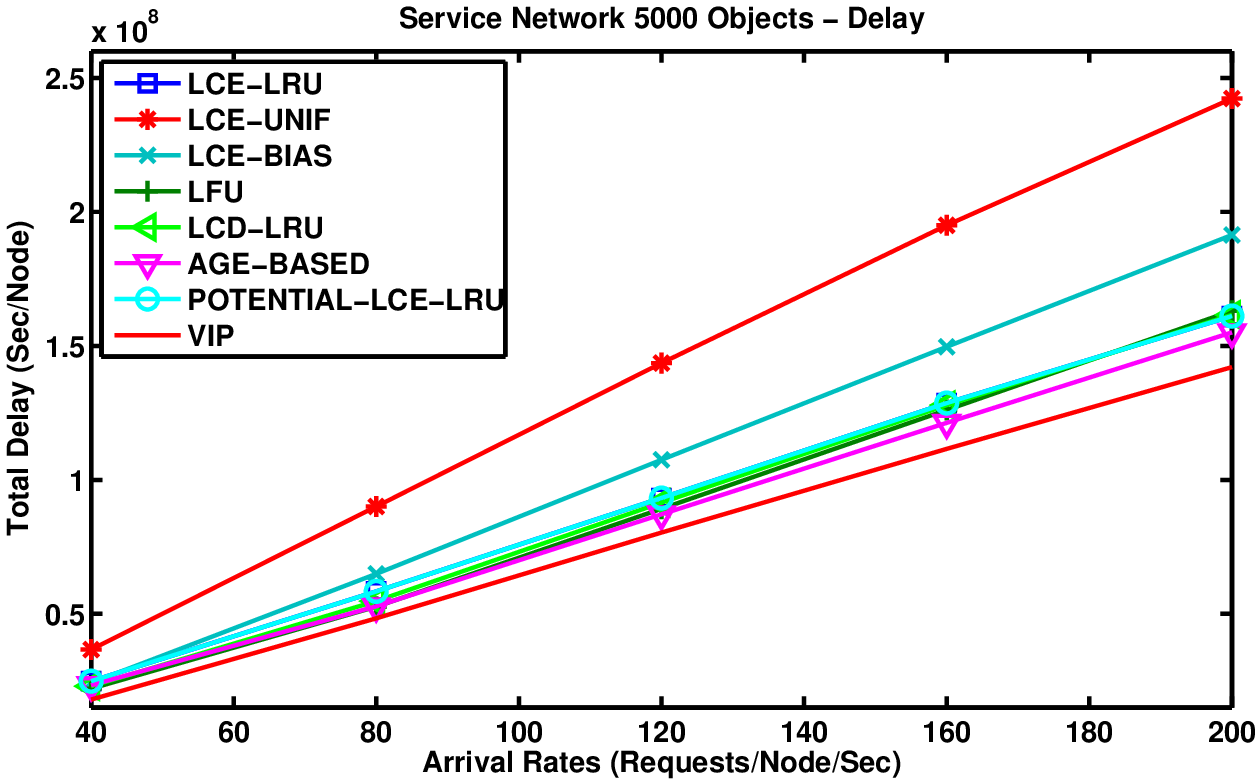}
\caption{Service Network Topology: Delay}

\vspace{10pt}
\centering
\includegraphics[height=4.8cm]{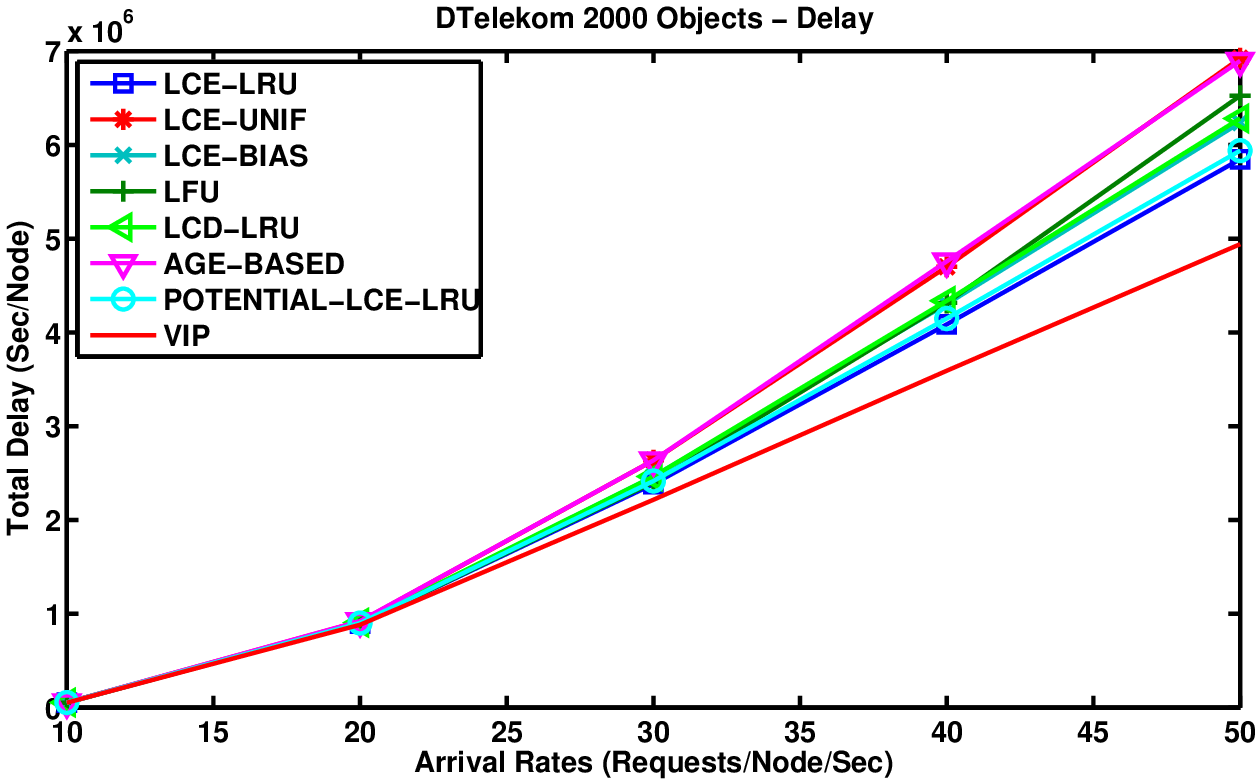}
\caption{DTelekom Topology: Delay}

\vspace{10pt}
\centering
\includegraphics[height=4.8cm]{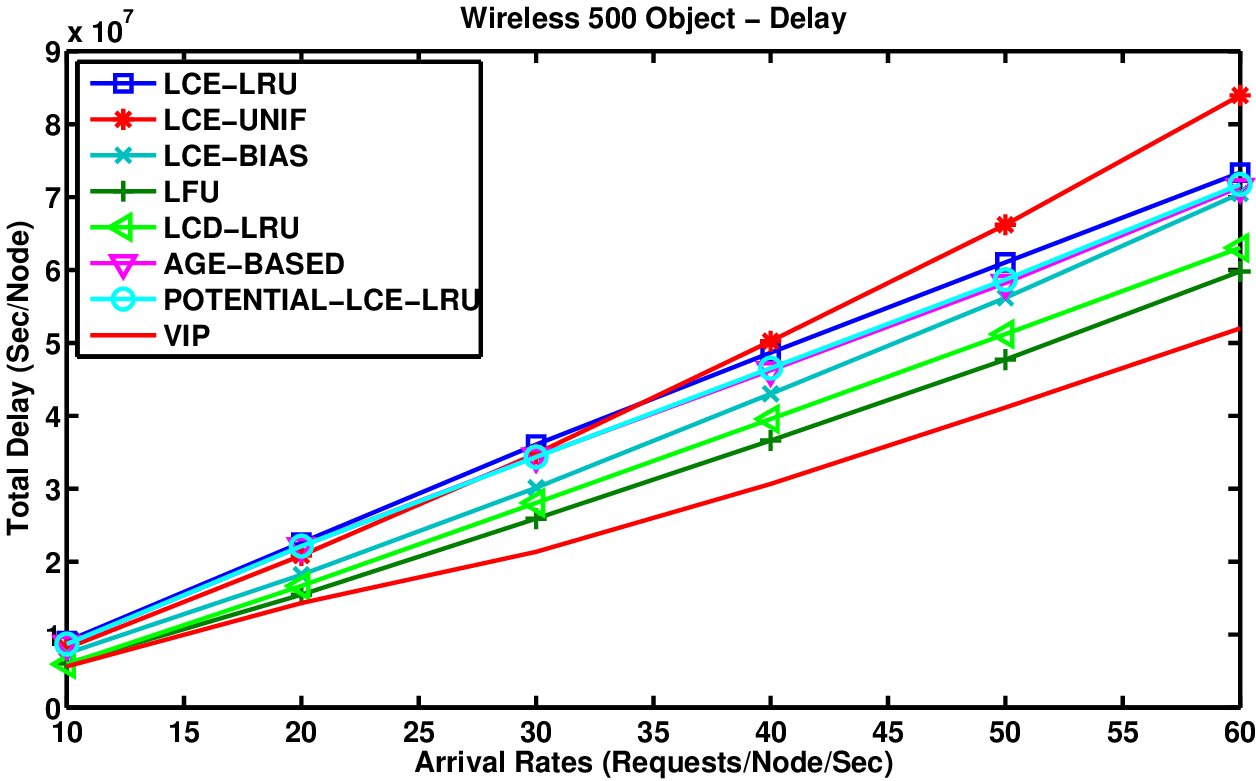}
\caption{Wireless Backhaul Topology: Delay}
\end{center}
\end{figure}

\begin{figure}[t]
\begin{center}
\vspace{-\baselineskip}
\centering
\includegraphics[height=4.8cm]{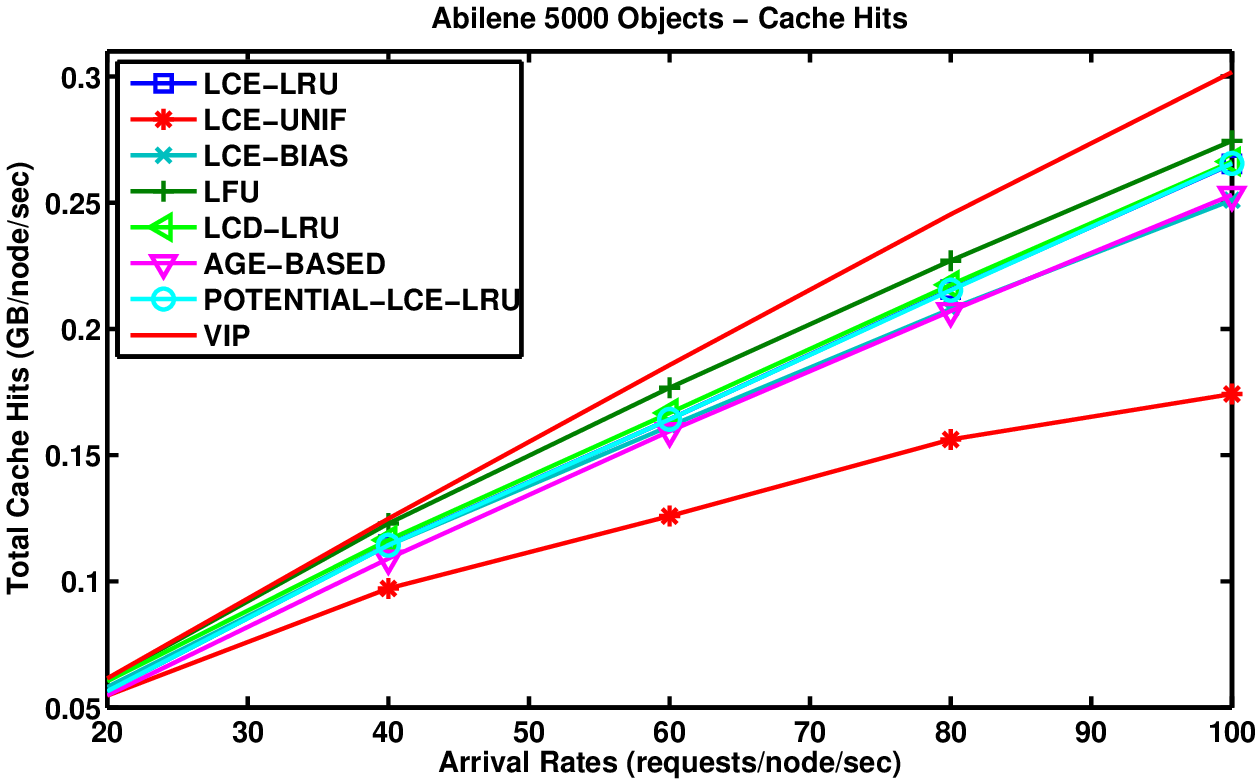}
\caption{Abilene Topology: Cache Hits}

\vspace{10pt}
\centering
\includegraphics[height=4.8cm]{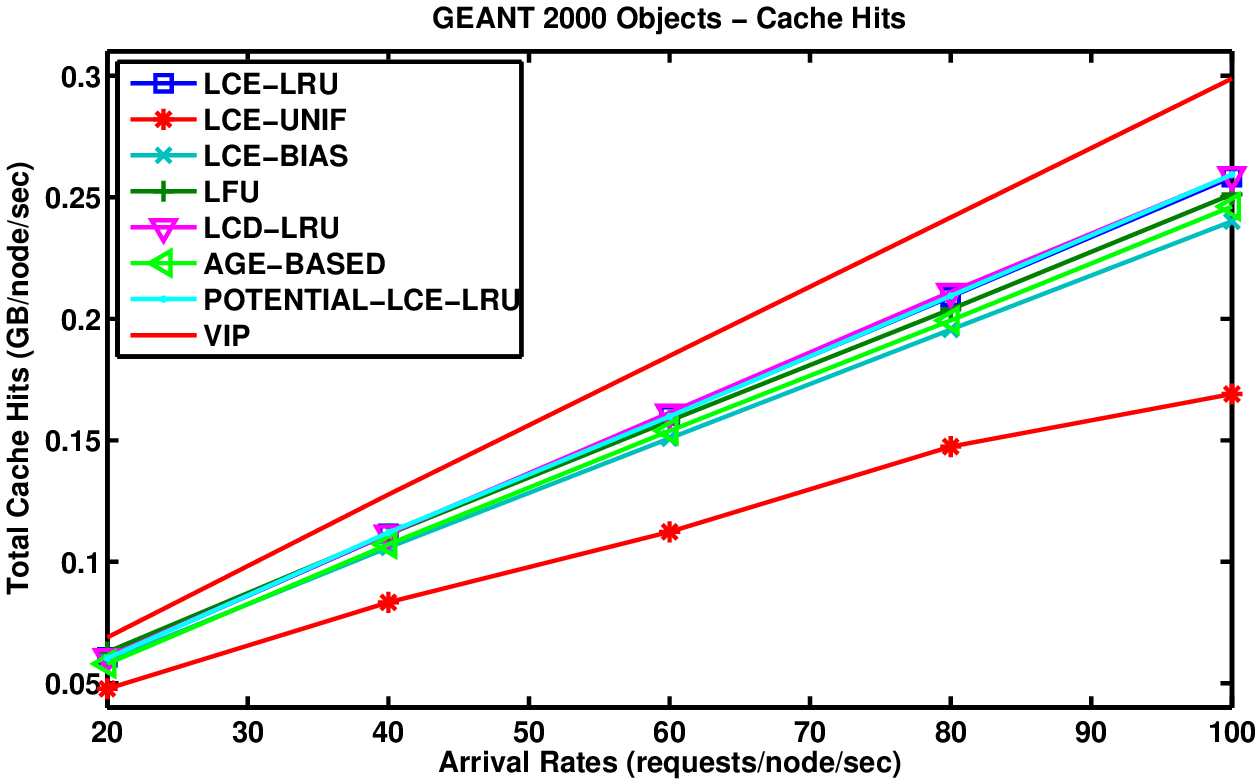}
\caption{GEANT Topology: Cache Hits}

\vspace{10pt}
\centering
\includegraphics[height=4.8cm]{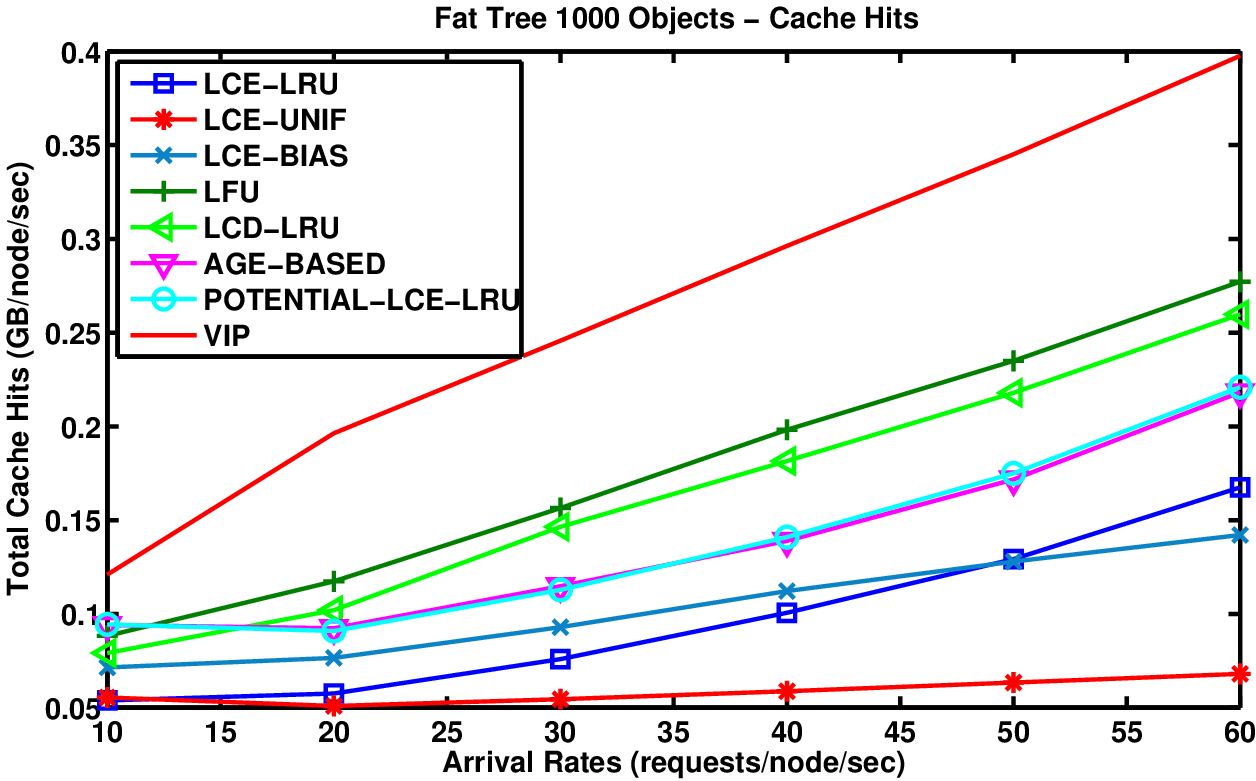}
\caption{Fat Tree Topology: Cache Hits}
\end{center}

\end{figure}

\begin{figure}[t]
\begin{center}
\vspace{-\baselineskip}
\centering
\includegraphics[height=4.8cm]{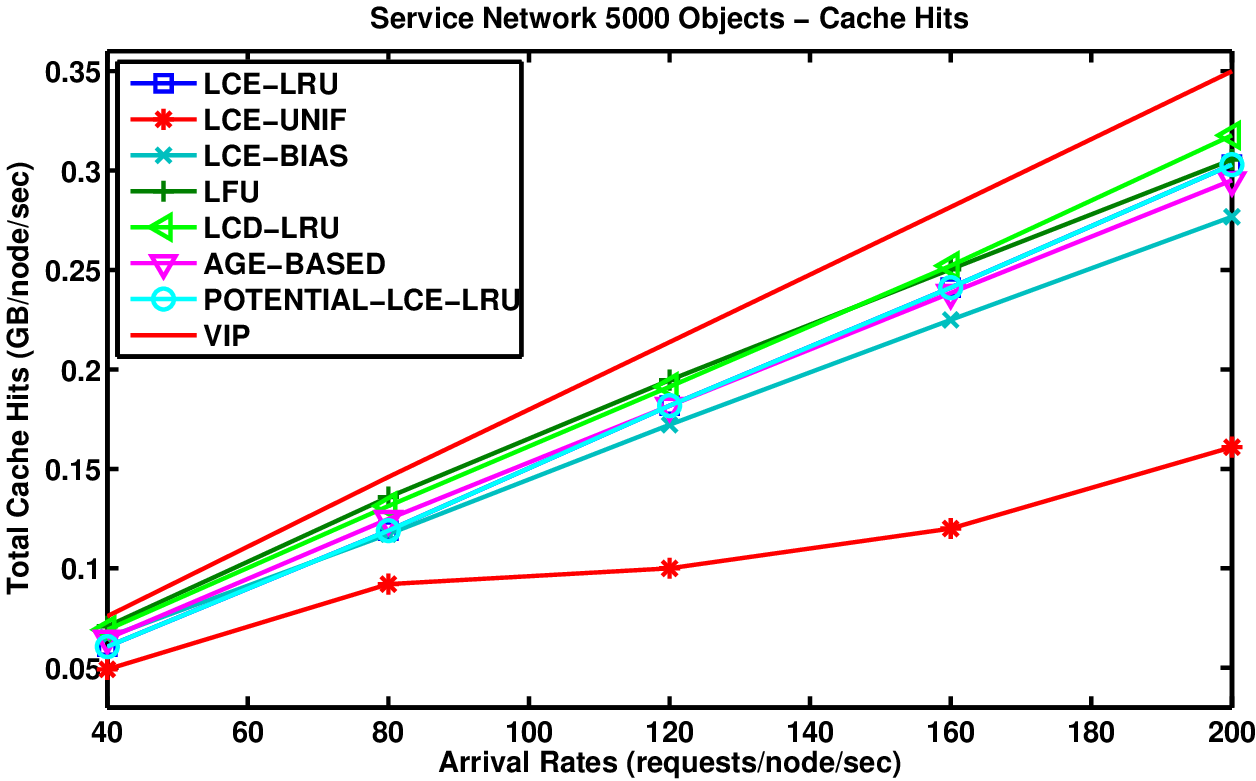}
\caption{Service Network Topology: Cache Hits}
\vspace{10pt}
\centering
\includegraphics[height=4.8cm]{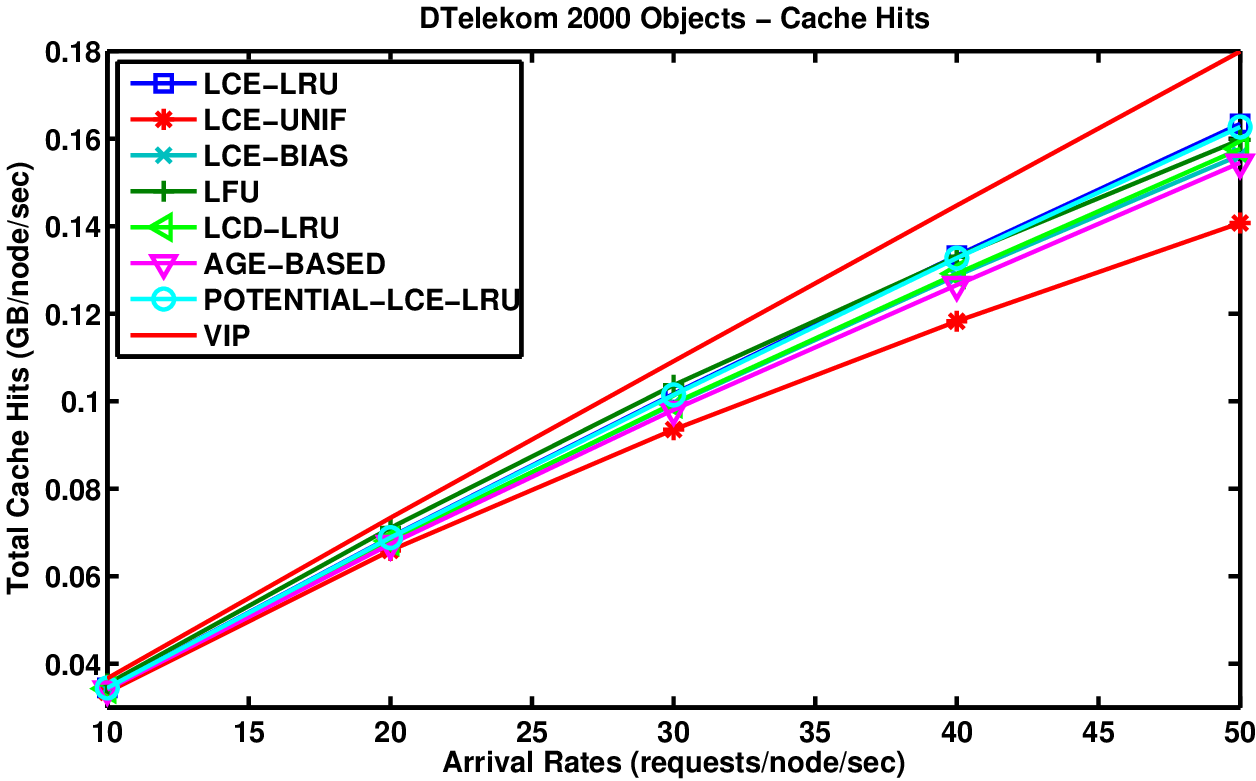}
\caption{DTelekom Topology: Cache Hits}
\vspace{10pt}
\centering
\includegraphics[height=4.8cm]{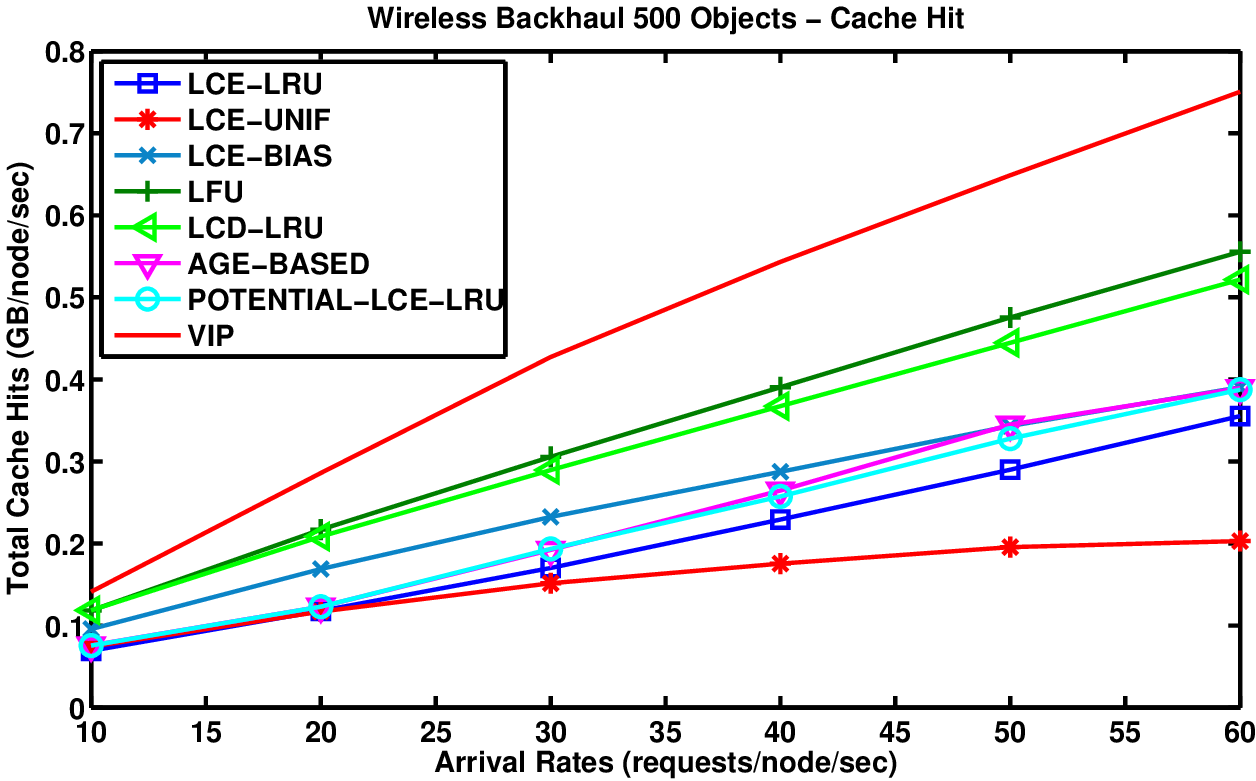}
\caption{Wireless Backhaul Topology: Cache Hits}
\end{center}
\end{figure}

We considered LCE-LRU, LCE-UNIF, and LCE-BIAS combined with shortest path forwarding.   We also considered (under shortest path forwarding) LCD combined with LRU, as well as LCE-LRU combined with potential-based forwarding.
\begin{figure}[t]
\begin{center}
\vspace{-\baselineskip}
\centering
\includegraphics[height=4.8cm]{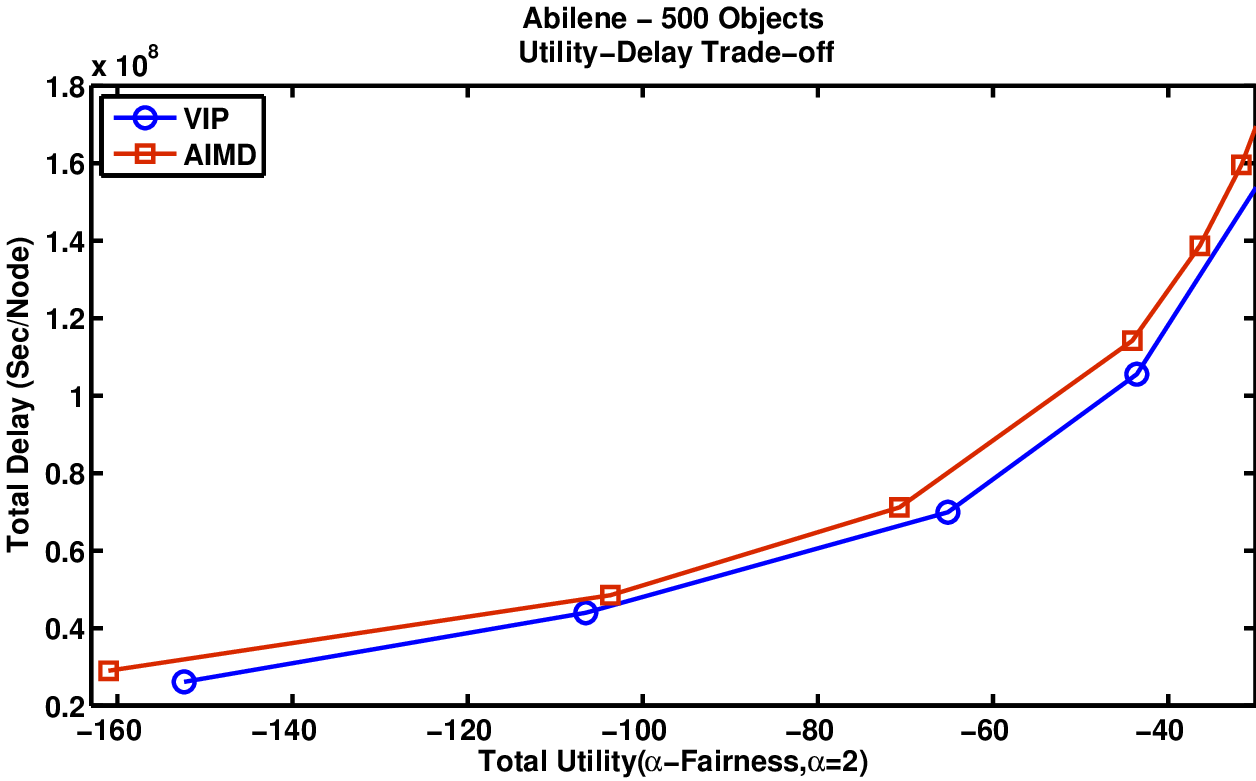}
\caption{Abilene Topology: Utility-Delay Trade-off}

\vspace{10pt}
\centering
\includegraphics[height=4.8cm]{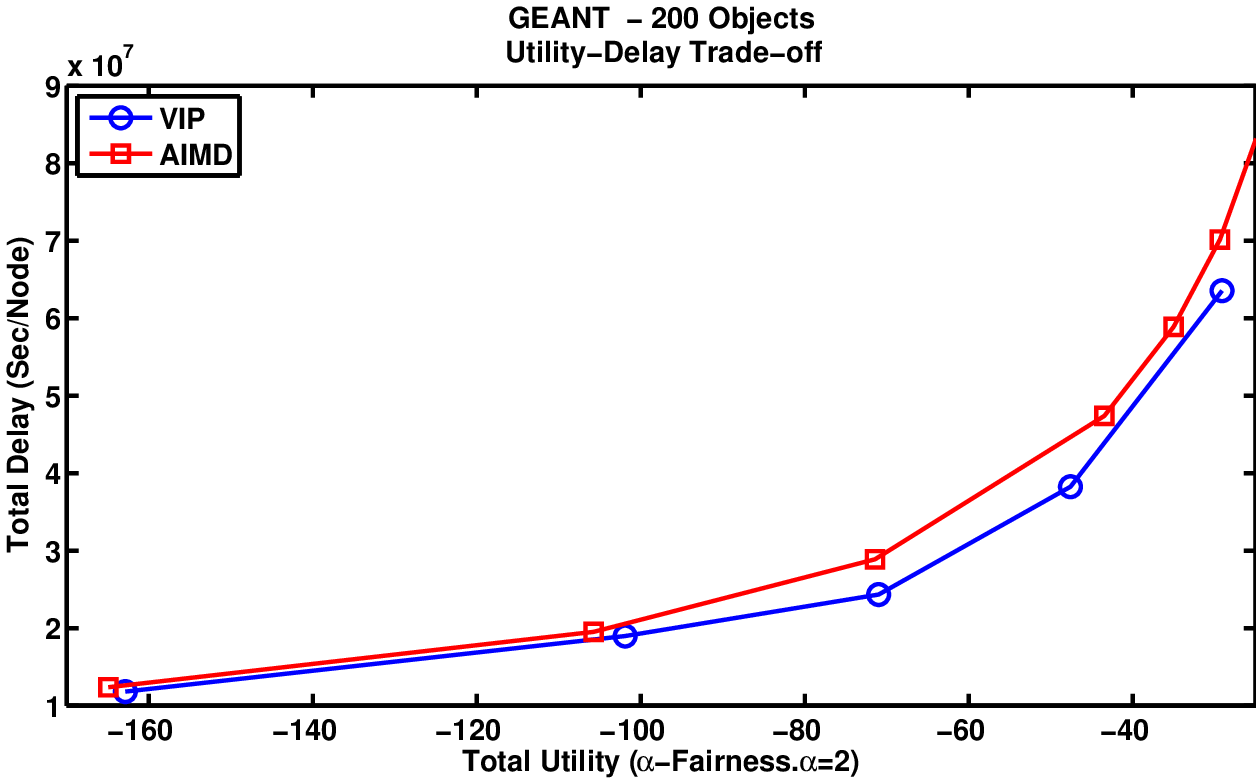}
\caption{GEANT Topology: Utility-Delay Trade-off}

\vspace{10pt}
\centering
\includegraphics[height=4.8cm]{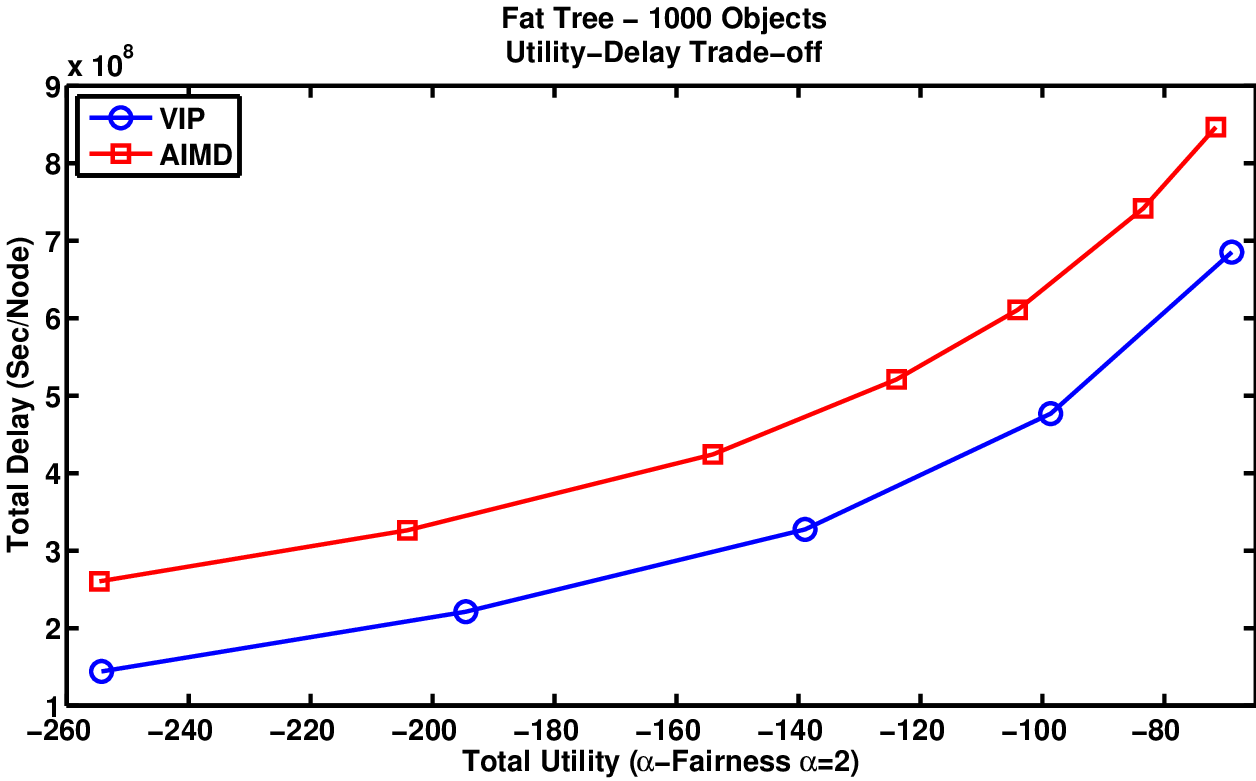}
\caption{Fat Tree Topology: Utility-Delay Trade-off}
\end{center}

\end{figure}

\begin{figure}[t]
\begin{center}
\vspace{-\baselineskip}
\centering
\includegraphics[height=4.8cm]{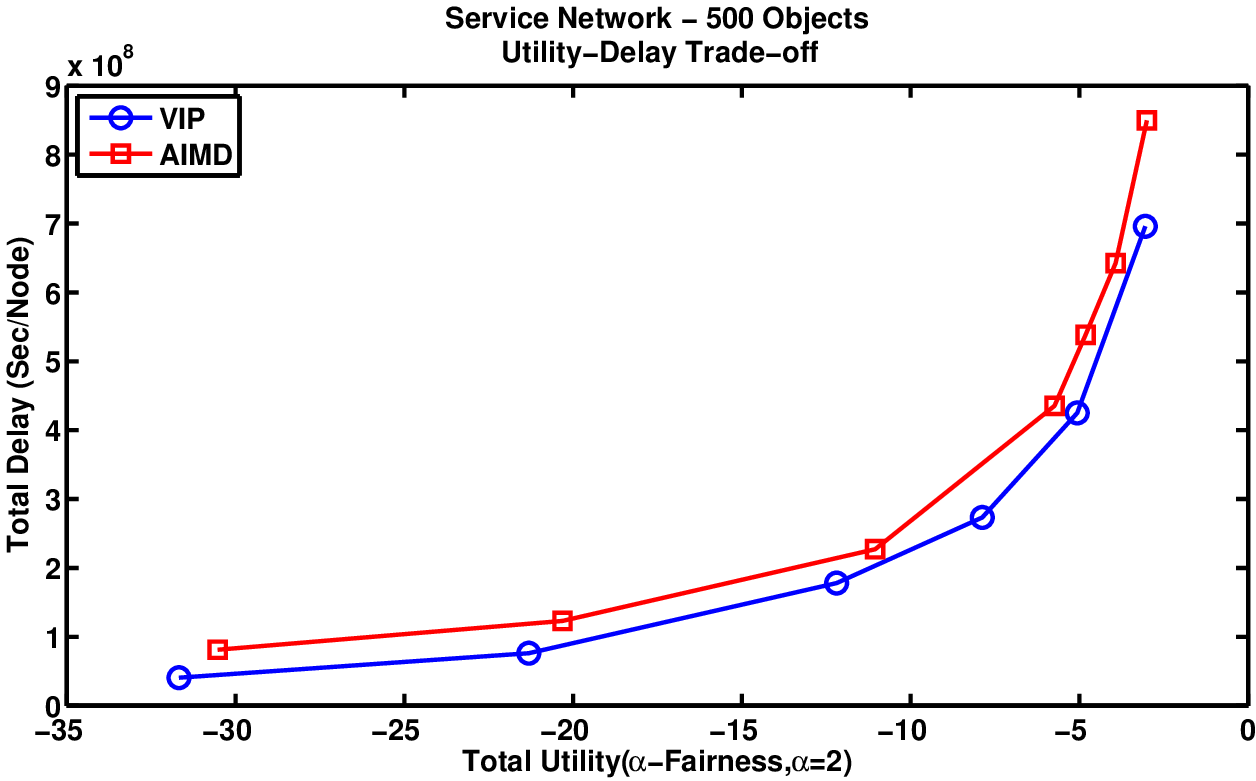}
\caption{Service Network Topology: Utility-Delay Trade-off}
\vspace{10pt}
\centering
\includegraphics[height=4.8cm]{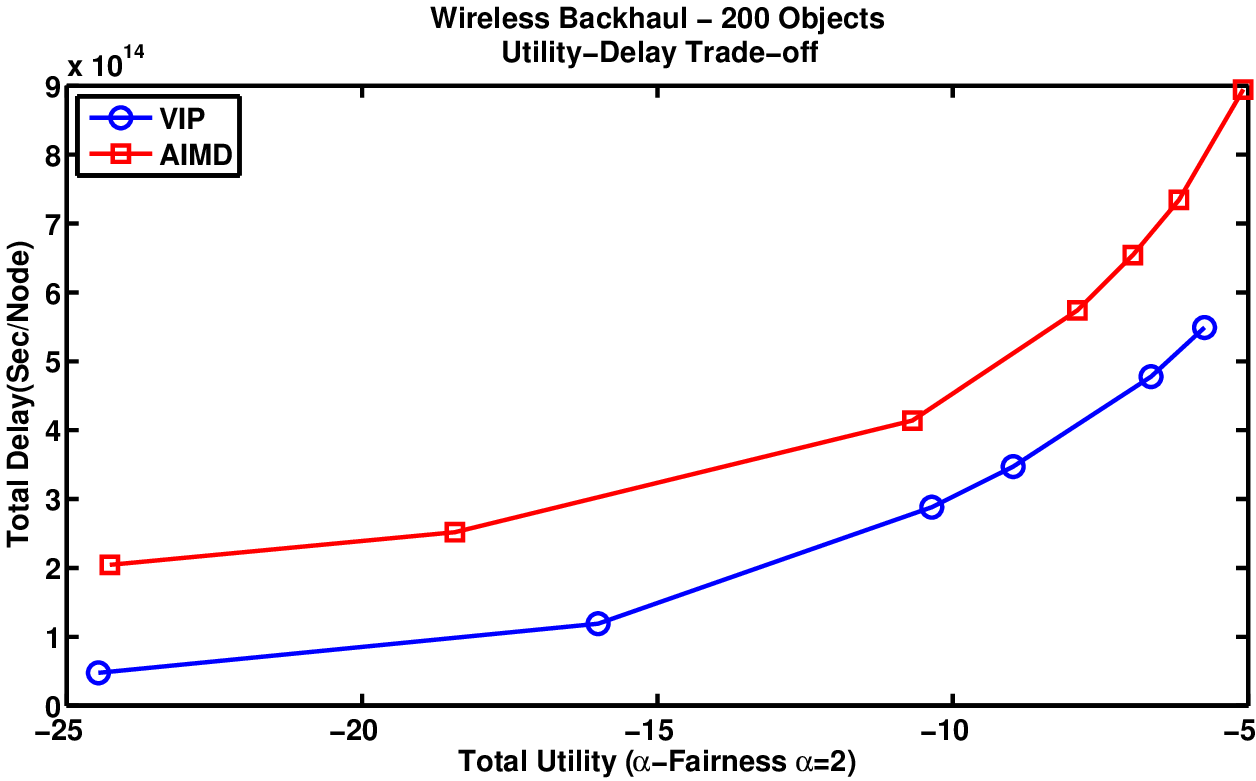}
\caption{Wireless Backhaul Topology: Utility-Delay Trade-off}
\end{center}
\end{figure}

The delay for an Interest Packet request is the difference between the fulfillment
time (i.e., time of arrival of the requested Data Packet) and the creation time
of the Interest Packet request.
A cache hit for a data chunk is recorded when an Interest Packet reaches a node which is not
a content source but which has the data chunk in its  cache.  When
a cache hit occurs, the corresponding metric is incremented by the size
of the chunk in cache.

Figures 3-8 show the delay performance of the algorithms.  It is clear that the Stable Caching VIP Algorithm significantly outperforms all other algorithms tested.  For instance, for the Abilene topology at $\lambda = 100$ requests/node/sec, the total delay for the VIP algorithm is only $65\%$ of the delay for the closest competitor (LCE-LRU), and only about $45\%$ of the delay for the worst performing algorithm (LCE-UNIF).

Figures 9-14 show the cache hit performance  for the algorithms.  Again, the Stable Caching VIP Algorithm  has significantly higher total cache hits than other algorithms.  For the  Fat Tree topology at $\lambda = 60$ requests/node/sec, the total number of cache hits for Algorithm 2 is about $45\%$ higher than that for the closest competitor (LFU) and is almost six times the number of cache hits for the worst performing algorithm (LCE-UNIF).


Note that age-based caching requires prior knowledge of the data object popularity distribution $\{ p_k\}$, while the potential-based routing algorithm needs to inform all network nodes whenever any node changes its cached contents.  Neither of these is required for the VIP algorithm, which measures the demand for various content objects, and enforces distributed and dynamic cache selection and cache replacement.

In sum, the Stable Caching VIP Algorithm significantly outperforms all competing algorithms tested, in terms of user delay and rate of cache hits.

\subsection{Experimental Evaluation of VIP with Congestion Control}
In this section, we present experimental results for the VIP Algorithm with congestion control.
Experiments are carried on the Abilene topology, the GEANT topology, the Service Network topology, the Fat Tree topology and the Wireless Backhaul topology.
When requests are generated at a node, the VIP Congestion Control algorithm controls the number of requests admitted into the network.

Simulation experiments were carried out to compare the utility-delay performance of the Stable Caching VIP algorithm with congestion control
(Algorithm 2 with congestion control) against that of the window-based AIMD (Additive Increase Multiplicative Decrease) congestion control algorithm (combined with a PIT-based forwarding scheme and Least Recently Used (LRU) caching)
described in~\cite{Giovanna}.  In the latter, AIMD window-based congestion control is used at the edges nodes.  A window is kept for each data object. When an edge node successfully fulfills a request for a chunk, the window size for the corresponding data object is increased by $1$.  In the meantime, the round-trip time for the current request is measured and compared to the most recently round-trip times for this data object.  The larger the round-trip time for the current request is relative to the most recent round-trip times, the more likely it is that the window for this data object is decreased by a factor $\beta=0.5$.  For the forwarding part of this algorithm, the node chooses the output interface with the least number of PIT entries for the corresponding data object to forward the Interest Packet.   The $\alpha$-fair utility function~\cite{MoWalrand} with $\alpha=2$ is used for $g^k_n(\cdot)$.  The total utility is the sum of $g^k_n(\cdot)$ over all $k$ and $n$.

Each simulation stops admitting requests upon the expiry of the simulation duration, and terminates when all admitted requests are fulfilled.
We record the number of requests for each object admitted into the network within this period, and calculate the corresponding total utility.  In the mean time, we also calculate the total delay to obtain utility-delay tradeoff curves.

%
%

Figures 15-19 show the utility-delay trade-off for the Congestion Controlled VIP algorithm and the AIMD Window-Based congestion control algorithm.  The VIP algorithm achieves the same total utility with much smaller total delay in all topologies tested.  In sum, the VIP algorithm with or without congestion control significantly improves network performance for a number of different metrics, across a wide range of network topologies.

\section{Conclusion}

The joint design of traffic engineering and caching strategies is central to information-centric architectures such as NDN, which seek to optimally utilize both bandwidth and storage for efficient content distribution.
In this work, we have introduced the VIP framework for the design of high performing NDN networks.  In the virtual plane of the VIP framework, distributed control algorithms operating on virtual interest packets (VIPs) are developed to maximize user demand rate satisfied by the network.  The flow rates and queue lengths of the VIPs are then used to specify the forwarding and caching algorithms in the actual plane, where Interest Packets and Data Packets are processed.   Within the VIP framework, we have developed joint forwarding, caching, and congestion control algorithms which exhibit superior performance in terms of user utility, user delay, and cache hit rates, relative to baseline alternatives.  These algorithms are distributed, dynamic, and do not require prior knowledge of the content popularity distribution.

\bibliography{IEEEabrv,cuiying-bib}
\bibliographystyle{ieeetr}

\section*{Appendix A: Proof of Theorem~\ref{thm:stability}}

The proof of Theorem \ref{thm:stability} involves showing that $\boldsymbol\lambda\in\Lambda$ is necessary for stability and that $\boldsymbol \lambda\in\text{int}(\Lambda)$ is sufficient for stability.
First, we show $\boldsymbol \lambda\in\Lambda$ is necessary for stability.  Suppose the network under arrival rate  $\boldsymbol \lambda$ is stabilizable by some feasible forwarding and caching policy. Let $F_{ab}^k(t)$ denote the number of VIPs for object $k$ transmitted over link $(a,b)$ during slot $t$, satisfying
\begin{align}
&F_{ab}^k(t)\geq 0, \ F_{nn}^k(t)=0, \ F_{src(k)n}^k(t)=0, \quad\nonumber\\
&\hspace{40mm} \forall a,b, n\in \mathcal N, k\in \mathcal K\label{eqn:proof-stability-region-F1}\\
&F_{ab}^k(t)=0, \quad \forall a,b\in \mathcal N, k\in \mathcal K, (a,b)\not \in \mathcal L^k\label{eqn:proof-stability-region-F2}\\
&\sum_{k\in \mathcal K} F_{ab}^k(t)\leq C_{ba}/z, \quad \forall (a,b)\in \mathcal L\label{eqn:proof-capacity-const}
\end{align}
For any slot $\tilde t$, we can define $f_{ab}^k=\sum_{\tau=1}^{\tilde t}F_{ab}^k(\tau)/\tilde t$.
Thus, by \eqref{eqn:proof-stability-region-F1}, \eqref{eqn:proof-stability-region-F2}, and \eqref{eqn:proof-capacity-const}, we can prove \eqref{eqn:stability-region-f1}, \eqref{eqn:stability-region-f2}, and \eqref{eqn:stability-region-capacity}, separately.
Let $S_n^k(t)$ denote the caching state of object $k$ at node $n$ during slot $t$, which satisfies
\begin{align}
&S_n^k(t)\in \{0,1\}, \quad \forall n\in \mathcal N, k\in \mathcal K\label{eqn:stability-region-S}
\end{align}
Define \footnote{Note that $\mathcal T_{n,i,l}\cap \mathcal T_{n,j,m}=\emptyset$ for all $(i,l)\neq (j,m)$ for all $n\in \mathcal N$.}
\begin{align}
\mathcal T_{n,i,l}=\bigg\{\tau\in \{1,\cdots, \tilde t\}:&S_n^k(\tau)=1 \ \forall k\in \mathcal B_{n,i,l}, \nonumber\\ &S_n^k(\tau)=0  \ \forall k\not\in \mathcal B_{n,i,l} \bigg\}\nonumber
\end{align}
for $i=1,\cdots, {K \choose l}$ and $l=0,\cdots, i_n$. Define $\beta_{n,i,l}=T_{n,i,l}/\tilde t$, where  $T_{n,i,l}=|\mathcal T_{n,i,l}|$. Thus, we can prove \eqref{eqn:stability-region-beta} and  \eqref{eqn:stability-region-cache}. It remains to prove \eqref{eq:sink}.
By Lemma 1 of \cite{Neely-Modiano-Rohrs:2005}, network stability implies there exists a finite $M$ such that $V_n^k(t)\leq M$ for all $n\in \mathcal N$ and $k\in \mathcal K$ holds infinitely often. Given an arbitrarily small value $\epsilon >0$, there exists a slot $\tilde t$ such that
\begin{align}
V_n^k(\tilde t)\leq M,\quad 
\frac{M}{\tilde t}\leq \epsilon, \quad
\left|\frac{\sum_{\tau=1}^{\tilde t}A_n^k(\tau)}{\tilde t}-\lambda_n^k\right|\leq \epsilon\label{eqn:proof-queue-bound}
\end{align}
In addition, since for all slot $t$, the queue length is equal to the difference between the total VIPs that have arrived and departed as well as drained, assuming $V_n^k(1)=0$, we have
\begin{align}
&\sum_{\tau=1}^t A_{n}^k(\tau)-V_n^k(t)\nonumber\\
\leq&\sum_{\tau=1}^t\sum_{b\in \mathcal N} F_{nb}^k(\tau)-\sum_{\tau=1}^t\sum_{a\in \mathcal N} F_{an}^k(\tau)+r_n\sum_{\tau=1}^tS_n^k(\tau)\label{eqn:proof-queue-const}
\end{align}
Thus,
by \eqref{eqn:proof-queue-bound} and \eqref{eqn:proof-queue-const},  we have
\begin{align}
&\lambda_n^k-\epsilon\leq \frac{1}{\tilde t}\sum_{\tau=1}^{\tilde t} A_{n}^k(\tau)\nonumber\\
\leq&\frac{1}{\tilde t}V_n^k(\tilde t)+\frac{1}{\tilde t}\sum_{\tau=1}^{\tilde t} \sum_{b\in \mathcal N} F_{nb}^k(\tau)-\frac{1}{\tilde t}\sum_{\tau=1}^{\tilde t} \sum_{a\in \mathcal N} F_{an}^k(\tau)\nonumber\\
&+r_n\frac{1}{\tilde t}\sum_{\tau=1}^{\tilde t} S_n^k(\tau)\label{eqn:proof-stability-region-lambda}
\end{align}
Since $ \sum_{\tau=1}^{\tilde t} S_n^k(\tau)=
\sum_{l=0}^{i_n}\sum_{i=1}^{{K \choose l}} T_{n,i,l}\mathbf 1[k\in \mathcal B_{n,i,l}]$, by
\eqref{eqn:proof-stability-region-lambda}, we have
$$\lambda_n^k\leq \sum_{b\in \mathcal N}f^k_{nb}-\sum_{a\in \mathcal N}f^k_{an}+r_n\sum_{l=0}^{i_n}\sum_{i=1}^{{K \choose l}} \beta_{n,i,l}\mathbf 1[k\in \mathcal B_{n,i,l}]+2\epsilon.$$
By letting $\epsilon\to 0$, we can prove \eqref{eq:sink}.

Next, we show $\boldsymbol \lambda\in\text{int}(\Lambda)$ is sufficient for stability. $\boldsymbol \lambda\in\text{int}(\Lambda)$ implies that there exists $\boldsymbol \epsilon=\left(\epsilon_n^k\right)$, where $\epsilon_n^k>0$, such that $\boldsymbol \lambda+\boldsymbol \epsilon\in \Lambda$. Let $\left(f_{ab}^k\right)$ and $\left(\boldsymbol \beta_n\right)$ denote the  flow variables  and storage variables associated with arrival rates $\boldsymbol \lambda+\boldsymbol \epsilon$. Thus, \eqref{eqn:stability-region-f1}, \eqref{eqn:stability-region-f2}, \eqref{eqn:stability-region-beta}, \eqref{eqn:stability-region-capacity}, \eqref{eqn:stability-region-cache}, and
\begin{align}
&\lambda_n^k+\epsilon_n^k\nonumber\\
\leq& \sum_{b\in \mathcal N}f^k_{nb}-\sum_{a\in \mathcal N}f^k_{an}+r_n\sum_{l=0}^{i_n}\sum_{i=1}^{{K \choose l}} \beta_{n,i,l}\mathbf 1[k\in \mathcal B_{n,i,l}],\nonumber\\
&\hspace{30mm}\forall n\in \mathcal N, k\in \mathcal K, n\neq src(k)\label{eqn:stability-region-lambda-epsilon}
\end{align}
hold. We now construct the randomized forwarding policy. For every link $(a,b)$ such that $\sum_{k\in \mathcal K}f_{ab}^k>0$, transmit the VIPs of the single object $\tilde k_{ab}$, where  $\tilde k_{ab}$ is chosen randomly to be $k$ with probability $f_{ab}^k/\sum_{k\in \mathcal K}f_{ab}^k$. Then, the number of VIPs that can be transmitted in slot $t$ is as follows:
\begin{align}
\tilde \mu_{ab}^k(t)=
\begin{cases}
\sum_{k\in \mathcal K}f_{ab}^k, & \text{if  $k=\tilde k_{ab}$} \\
0, & \text{otherwise}
\end{cases}
\end{align}
Null bits are delivered if there are not enough bits in a queue. For every link $(a,b)$ such that $\sum_{k\in \mathcal K}f_{ab}^k=0$, choose $\tilde \mu_{ab}^k(t)=0$ for all $k\in\mathcal K$. Thus, we have
\begin{align}
\mathbb E\left[\tilde \mu_{ab}^k(t)\right]=f_{ab}^k\label{eqn:rand-mu-ave}
\end{align}
Next, we construct the randomized caching policy. For every node $n$, cache the single combination $\tilde{\mathcal B}_n$, where $\tilde{\mathcal B}_n$ is chosen randomly to be $\mathcal B_{n,i,l}$ with probability $\beta_{n,i,l}/\sum_{l=0}^{i_n}\sum_{i=1}^{{K \choose l}} \beta_{n,i,l}=\beta_{n,i,l},$ as $\sum_{l=0}^{i_n}\sum_{i=1}^{{K \choose l}} \beta_{n,i,l}=1$ by \eqref{eqn:stability-region-cache}. Then, the caching state in slot $t$ is as follows:
\begin{align}
\tilde s_n^k(t)=
\begin{cases}
1, & \text{if  $k\in \tilde{\mathcal B}_n$} \\
0, & \text{otherwise}
\end{cases}
\end{align}
Thus, we have
\begin{align}
\mathbb E\left[\tilde s_n^k(t)\right]=\sum_{l=0}^{i_n}\sum_{i=1}^{{K \choose l}} \beta_{n,i,l}\mathbf 1[k\in \mathcal B_{n,i,l}] \label{eqn:rand-s-ave}
\end{align}
Therefore, by \eqref{eqn:rand-mu-ave}, \eqref{eqn:rand-s-ave} and \eqref{eqn:stability-region-lambda-epsilon}, we have
\begin{align}
&\mathbb E\left[\left(\sum_{b\in \mathcal N}\tilde{\mu}^k_{nb}(t)-\sum_{a\in \mathcal N}\tilde{\mu}^k_{an}(t)+r_n^{(k)}\tilde{s}_n^k(t)\right)\right]\nonumber\\
=&\sum_{b\in \mathcal N}f^k_{nb}-\sum_{a\in \mathcal N}f^k_{an}+r_n\sum_{l=0}^{i_n}\sum_{i=1}^{{K \choose l}} \beta_{n,i,l}\mathbf 1[k\in \mathcal B_{n,i,l}]\nonumber\\
\geq &\lambda_n^k+\epsilon_n^k\label{eqn:rand-policy-ineq}
\end{align}
In other words, the arrival rate is less than the service rate.
Thus, by Loynes' theorem\cite{Loynes:62}, we can show that the network is stable.

\section*{Appendix B: Proof of Theorem~\ref{Thm:thpt-opt}}

Define the quadratic Lyapunov function $\mathcal L(\mathbf V)\triangleq
\sum_{n\in \mathcal N,k\in \mathcal K}(V^k_n)^2$. The Lyapunov drift at slot $t$ is given by
$\Delta (\mathbf V(t))\triangleq \mathbb E[\mathcal L\big(\mathbf
V(t+1)\big)-\mathcal L\big(\mathbf V(t)\big)|\mathbf V(t)]$. First, we calculate $\Delta (\mathbf V(t))$.
Taking square on both sides of \eqref{eqn:queue_dyn}, we have
\begin{align}
&\left(V^k_n(t+1)\right)^2\nonumber\\
\leq&\Bigg(\Bigg(
\left(V^k_n(t)-\sum_{b\in \mathcal N}\mu^k_{nb}(t)\right)^+ +A^k_n(t)
\nonumber\\
&+\sum_{a\in \mathcal N}\mu^k_{an}(t)-r_ns_n^k(t)\Bigg)^+\Bigg)^2\nonumber\\
\leq&\Bigg(
\left(V^k_n(t)-\sum_{b\in \mathcal N}\mu^k_{nb}(t)\right)^+ +A^k_n(t)\nonumber\\
&+\sum_{a\in \mathcal N}\mu^k_{an}(t)-r_ns_n^k(t)\Bigg)^2\nonumber\\
\leq&\left(V^k_n(t)-\sum_{b\in \mathcal N}\mu^k_{nb}(t)\right)^2+2
\left(V^k_n(t)-\sum_{b\in \mathcal N}\mu^k_{nb}(t)\right)^+\nonumber\\
&\times\left(A^k_n(t)
+\sum_{a\in \mathcal N}\mu^k_{an}(t)-r_ns_n^k(t)\right)\nonumber\\
&+\left(A^k_n(t)
+\sum_{a\in \mathcal N}\mu^k_{an}(t)-r_ns_n^k(t)\right)^2
\nonumber\\
=&\left(V^k_n(t)\right)^2+\left(\sum_{b\in \mathcal N}\mu^k_{nb}(t)\right)^2-2 V^k_n(t)\sum_{b\in \mathcal N}\mu^k_{nb}(t) \nonumber\\
&+\left(A^k_n(t)+\sum_{a\in \mathcal N}\mu^k_{an}(t)-r_n^ks_n^k(t)\right)^2\nonumber\\
&+2\left(V^k_n(t)-\sum_{b\in \mathcal N}\mu^k_{nb}(t)\right)^+\left(A^k_n(t)
+\sum_{a\in \mathcal N}\mu^k_{an}(t)\right)\nonumber\\
&-2\left(V^k_n(t)-\sum_{b\in \mathcal N}\mu^k_{nb}(t)\right)^+r_n^ks_n^k(t)\nonumber\\
\leq&\left(V^k_n(t)\right)^2+\left(\sum_{b\in \mathcal N}\mu^k_{nb}(t)\right)^2-2 V^k_n(t)\sum_{b\in \mathcal N}\mu^k_{nb}(t) \nonumber\\
&+\left(A^k_n(t)+\sum_{a\in \mathcal N}\mu^k_{an}(t)+r_n^ks_n^k(t)\right)^2\nonumber\\
&+2V^k_n(t)\left(A^k_n(t)
+\sum_{a\in \mathcal N}\mu^k_{an}(t)\right)\nonumber\\
&-2\left(V^k_n(t)-\sum_{b\in \mathcal N}\mu^k_{nb}(t)\right)r_n^ks_n^k(t)\nonumber\\
\leq&\left(V^k_n(t)\right)^2+\left(\sum_{b\in \mathcal N}\mu^k_{nb}(t)\right)^2+2\sum_{b\in \mathcal N}\mu^k_{nb}(t)r_ns_n^k(t)\nonumber\\
&+\left(A^k_n(t)+\sum_{a\in \mathcal N}\mu^k_{an}(t)+r_ns_n^k(t)\right)^2\nonumber\\
&+2V^k_n(t)A^k_n(t)-2V^k_n(t)\left(\sum_{b\in \mathcal N}\mu^k_{nb}(t)-\sum_{a\in \mathcal N}\mu^k_{an}(t)\right)\nonumber\\
&-2V^k_n(t)r_ns_n^k(t)\nonumber
\end{align}
Summing over all $n,k$, we have
\begin{align}
&\mathcal L\left(\mathbf V(t+1)\right)-\mathcal L\left(\mathbf V(t)\right)\nonumber\\
\stackrel{(a)}{\leq}&2N B+2\sum_{n\in \mathcal N,k\in \mathcal K}V^k_n(t)A^k_n(t)\nonumber\\
&-2\sum_{(a,b)\in \mathcal L}\sum_{k\in \mathcal K}
\mu^k_{ab}(t)\big(V^k_a(t)-V^k_b(t)\big)\nonumber\\
&-2\sum_{n\in \mathcal N,k\in \mathcal K}V^k_n(t)r_ns_n^k(t)\label{eqn:proof-deltaL}
\end{align}
where (a) is due to the following:
\begin{align}
&\sum_{k\in \mathcal K}\left(\sum_{b\in \mathcal N}\mu^k_{nb}(t)\right)^2\leq \left(\sum_{k\in \mathcal K}\sum_{b\in \mathcal N}\mu^k_{nb}(t)\right)^2\leq\left(\mu^{out}_{n, \max}\right)^2,\nonumber\\
& \sum_{k\in \mathcal K}\left(A^k_n(t)+\sum_{a\in \mathcal N}\mu^k_{an}(t)+r_ns_n^k(t)\right)^2\nonumber\\
&\leq \left(\sum_{k\in \mathcal K}\left(A^k_n(t)+\sum_{a\in \mathcal N}\mu^k_{an}(t)+r_ns_n^k(t)\right)\right)^2\nonumber\\
 &\leq (A_{n,\max}+\mu^{in}_{n,\max}+r_{n,\max})^2, \nonumber\\
&\sum_{k\in \mathcal K}\sum_{b\in \mathcal N}\mu^k_{nb}(t)r_ns_n^k(t)\nonumber\\
&\leq \left(\sum_{k\in \mathcal K}\sum_{b\in \mathcal N}\mu^k_{nb}(t)\right)\left(\sum_{k\in \mathcal K}r_ns_n^k(t)\right)\leq \mu^{out}_{n,
\max}r_{n,\max},\nonumber\\
&\sum_{n\in \mathcal N,k\in \mathcal K}V^k_n(t)\left(\sum_{b\in\mathcal N}\mu^k_{nb}(t)-\sum_{a\in \mathcal N}\mu^k_{an}(t)\right)\nonumber\\
&=\sum_{(a,b)\in \mathcal L}\sum_{k\in \mathcal K}
\mu^k_{ab}(t)\big(V^k_a(t)-V^k_b(t)\big).\nonumber
\end{align}
Taking conditional
expectations on both sides of \eqref{eqn:proof-deltaL}, we have
\begin{align}
&\Delta (\mathbf
V(t))\nonumber\\
\leq&2N B+2\sum_{n\in \mathcal N,k\in \mathcal K}V^k_n(t)\lambda^k_n\nonumber\\
&-2\mathbb
E\left[\sum_{(a,b)\in \mathcal L}\sum_{k\in \mathcal K}
\mu^k_{ab}(t)\left(V^k_a(t)-V^{(c)}_b(t)\right)|\mathbf V(t)\right]\nonumber\\
&-2\mathbb E\left[\sum_{n\in \mathcal N,k\in \mathcal K}V^k_n(t)r_n s_n^k(t)|\mathbf V(t)\right]\nonumber\\
\stackrel{(b)}{\leq}&2N B+2\sum_{n\in \mathcal N,k\in \mathcal K}V^k_n(t)\lambda^k_n\nonumber\\
&-2\mathbb
E\left[\sum_{(a,b)\in \mathcal L}\sum_{k\in \mathcal K}
\tilde{\mu}^k_{ab}(t)\left(V^k_a(t)-V^k_b(t)\right)|\mathbf V(t)\right]\nonumber\\
&-2\mathbb E\left[\sum_{n\in \mathcal N,k\in \mathcal K}V^k_n(t)r_n\tilde{s}_n^k(t)|\mathbf V(t)\right]\nonumber\\
=&2N B+2\sum_{n\in \mathcal N,k\in \mathcal K}V^k_n(t)\lambda^k_n-2\sum_{n\in \mathcal N,k\in \mathcal K}V^k_n(t)\nonumber\\
&\times\mathbb E\left[\left(\sum_{b\in \mathcal N}\tilde{\mu}^k_{nb}(t)-\sum_{a\in \mathcal N}\tilde{\mu}^k_{an}(t)+r_n\tilde{s}_n^k(t)\right)|\mathbf
V(t)\right]\label{eqn:proof_ineq0}
\end{align}
where (b) is due to the fact that Algorithm \ref{Alg:VIP} minimizes the
R.H.S. of  (b) over all feasible $\tilde{\mu}^k_{ab}(t)$ and $\tilde{s}_n^k(t)$.\footnote{ Note that $\mu^k_{ab}(t)$ and $s_n^k(t)$ denote the actions of Algorithm \ref{Alg:VIP}.}  Since $\boldsymbol
\lambda+\boldsymbol \epsilon\in \Lambda$, according to the proof of Theorem \ref{thm:stability}, there exists a stationary
randomized forwarding and caching policy that makes decisions
independent of $\mathbf V(t)$ such that
\begin{align}
&\mathbb E\left[\left(\sum_{b\in \mathcal N}\tilde{\mu}^k_{nb}(t)-\sum_{a\in \mathcal N}\tilde{\mu}^k_{an}(t)+r_n\tilde{s}_n^k(t)\right)|\mathbf
V(t)\right]\nonumber\\
\geq &\lambda^k_n+\epsilon^k_n\label{eqn:proof_ineq1}
\end{align}
Substituting \eqref{eqn:proof_ineq1}
into \eqref{eqn:proof_ineq0}, we have $\Delta (\mathbf V(t))
\leq 2N B-2\sum_{n\in \mathcal N,k\in \mathcal K} \epsilon^k_nV^k_n(t)
\leq
2N B-2\epsilon\sum_{n\in \mathcal N,k\in \mathcal K}V^k_n(t) $.
By  Lemma 4.1 of \cite{Georgiadis-Neely-Tassiulas:2006}, we complete the proof.

\section*{Appendix C: Proof of Theorem~\ref{Thm:flow-control}}

Define the Lyapunov function $\mathcal L(\boldsymbol \Theta)\triangleq
\sum_{n\in \mathcal N,k\in \mathcal K}\left((V^k_n)^2+(Y^k_n)^2\right)$, where $\boldsymbol \Theta\triangleq (\mathbf V, \mathbf Y)$. The Lyapunov drift at slot $t$ is
$\Delta (\boldsymbol \Theta(t))\triangleq \mathbb E[\mathcal L\big(\boldsymbol \Theta(t+1)\big)- \mathcal  L\left(\boldsymbol \Theta(t)\right)|\boldsymbol \Theta(t)]$.  First, we calculate $\Delta (\boldsymbol \Theta(t))$. Similar to Appendix B, taking square on both sides of \eqref{eqn:queue_dyn-flow}, we have
\begin{align}
&\left(V^k_n(t+1)\right)^2\nonumber\\
\leq&\left(V^k_n(t)\right)^2+\left(\sum_{b\in \mathcal N}\mu^k_{nb}(t)\right)^2+2\sum_{b\in \mathcal N}\mu^k_{nb}(t)r_ns_n^k(t)\nonumber\\
&+\left(\alpha^k_n(t)+\sum_{a\in \mathcal N}\mu^k_{an}(t)+r_ns_n^k(t)\right)^2\nonumber\\
&+2V^k_n(t)\alpha^k_n(t)-2V^k_n(t)\left(\sum_{b\in \mathcal N}\mu^k_{nb}(t)-\sum_{a\in \mathcal N}\mu^k_{an}(t)\right)\nonumber\\
&-2V^k_n(t)r_ns_n^k(t)\nonumber
\end{align}
In addition, taking square on both sides of \eqref{eqn:queue_dyn-virtual}, we have
\begin{align}
&\left(Y^k_n(t+1)\right)^2\nonumber\\
\leq&\left(Y^k_n(t)\right)^2+\left(\alpha^k_n(t)\right)^2+\left(\gamma^k_n(t)\right)^2 -2Y_n^k(t)\left(\alpha_n^k(t)-\gamma_n^k(t)\right)\nonumber
\end{align}
Therefore, we have
\begin{align}
&\mathcal  L\left(\boldsymbol \Theta (t+1)\right)- \mathcal  L\left(\boldsymbol \Theta (t)\right)\nonumber\\
 \leq&
2N\hat B+2\sum_{n\in \mathcal N,k\in \mathcal K}V^k_n(t)\alpha^k_n(t)\nonumber\\
&  -2\sum_{n\in \mathcal N,k\in \mathcal K}V^k_n(t)\left(\sum_{b\in \mathcal N}\mu^k_{nb}(t)-\sum_{a\in \mathcal N}\mu^k_{an}(t)\right)\nonumber\\
& -2\sum_{n\in \mathcal N,k\in \mathcal K}V^k_n(t)r_ns_n^k(t)\nonumber\\
& -2\sum_{n\in \mathcal N,k\in \mathcal K}Y_n^k(t)\left(\alpha_n^k(t)-\gamma_n^k(t)\right)
\label{eqn:proof-drift-flow}
\end{align}
Taking conditional
expectations and subtracting $$W\mathbb E\left[\sum_{n\in \mathcal N,k\in \mathcal K}g^k_n\left( \gamma^k_n(t)\right)|\boldsymbol \Theta (t)\right]$$ from both sides of \eqref{eqn:proof-drift-flow}, we have
\begin{align}
&\Delta \left(\boldsymbol \Theta (t)\right )-W\mathbb E\left[\sum_{n\in \mathcal N,k\in \mathcal K}g^k_n\left( \gamma^k_n(t)\right)|\boldsymbol \Theta(t)\right]\nonumber\\
 \stackrel{(a)}{\leq}& 2N\hat B-2\sum_{n\in \mathcal N,k\in \mathcal K}\left(Y^k_n(t)-V^k_n(t)\right)\mathbb E\left[\tilde \alpha^k_n(t)|\boldsymbol \Theta(t)\right]\nonumber\\
& -\sum_{n\in \mathcal N,k\in \mathcal K}\mathbb E\left[Wg^k_n\left( \tilde \gamma^k_n(t)\right)-2Y_n^k(t)\tilde \gamma^k_n(t)|\boldsymbol \Theta(t)\right]\nonumber\\
&  - 2\sum_{n\in \mathcal N,k\in \mathcal K}V^k_n(t)\nonumber\\
&\times\mathbb E\left[\left(\sum_{b\in \mathcal N}\tilde{\mu}^k_{nb}(t)-\sum_{a\in \mathcal N}\tilde{\mu}^k_{an}(t)+r_n\tilde{s}_n^k(t)\right)|\boldsymbol \Theta(t)\right]
\label{eqn:proof_ineq0-flow}
\end{align}
where (a) is due to the fact that  Algorithm \ref{Alg:flow-DBP} minimizes the
R.H.S. of  (b) over all possible alternative
$\tilde \alpha^k_n (t)$, $\tilde \gamma^k_n(t)$, $\tilde \mu^k_{ab}(t)$, and $\tilde s^k_n(t)$.\footnote{Note that $\alpha^k_n (t)$, $\gamma^k_n(t)$, $\mu^k_{ab}(t)$ and $s^k_n(t)$
denote the actions of Algorithm \ref{Alg:flow-DBP}.}
It is not difficult to construct alternative random policies that choose $\tilde \alpha^k_n (t)$, $\tilde \gamma^k_n(t)$, $\tilde \mu^k_{ab}(t)$ and $\tilde s^k_n(t)$ such that
\begin{align}
&\mathbb E\left[\tilde \alpha^k_n(t)|\boldsymbol \Theta(t)\right]=\overline \alpha^{k*}_n(\boldsymbol \epsilon)\label{eqn:rand-r}\\
& \tilde \gamma^k_n(t)=\overline \alpha^{k*}_n(\boldsymbol \epsilon)\label{eqn:rand-gamma}\\
& \mathbb E\left[\left(\sum_{b\in \mathcal N}\tilde{\mu}^k_{nb}(t)-\sum_{a\in \mathcal N}\tilde{\mu}^k_{an}(t)+r_n\tilde{s}_n^k(t)\right)|\boldsymbol \Theta(t)\right]\nonumber\\
 \geq& \overline \alpha^{k*}_n(\boldsymbol \epsilon)+\epsilon^k_n \label{eqn:rand-mu}
\end{align}
where  $ \overline{\boldsymbol \alpha}^*(\boldsymbol \epsilon  )=(\overline \alpha^{k*}_n(\boldsymbol \epsilon))$ is the target $\boldsymbol \epsilon $-optimal admitted rate given by
\eqref{eqn:eps-opt-prob}.\footnote{Specifically, \eqref{eqn:rand-r} can be achieved by the random policy setting $\tilde \alpha^k_n(t)=A^k_n(t)$ with probability $\overline \alpha^{k*}_n(\boldsymbol \epsilon)/\lambda^k_n$ and $\tilde \alpha^k_n(t)=0$ with probability $1-\overline \alpha^{k*}_n(\boldsymbol \epsilon)/\lambda^k_n$.}
\eqref{eqn:rand-mu} follows from the same arguments leading to~\eqref{eqn:proof_ineq1}.
Thus, by \eqref{eqn:rand-r}, \eqref{eqn:rand-gamma} and \eqref{eqn:rand-mu}, from \eqref{eqn:proof_ineq0-flow}, we obtain
\begin{align}
&\Delta (\boldsymbol \Theta (t))-W\mathbb E\left[\sum_{n\in \mathcal N,k\in \mathcal K}g^k_n\left( \gamma^k_n(t)\right)|\boldsymbol \Theta(t)\right]\nonumber\\
 \leq &  2N\hat B-2\min_{n\in \mathcal N,k\in \mathcal K}\left\{\epsilon_n^k\right\}  \sum_{n\in \mathcal N,k\in \mathcal K}V_n^k(t)\nonumber\\
 &-W\sum_{n\in \mathcal N,k\in \mathcal K}g^k_n\left(\overline \alpha^{k*}_n(\boldsymbol \epsilon )\right)\nonumber
\end{align}
Applying Theorem 5.4 of \cite{Georgiadis-Neely-Tassiulas:2006}, we have
\begin{align}
&\limsup_{t\to\infty}\frac{1}{t}\sum_{\tau=1}^t\sum_{n\in \mathcal N,k\in \mathcal K} \mathbb
E[V^k_n(\tau)]  \nonumber\\
\leq& \frac{2N\hat B+WG_{\max}}{
2\min_{n\in \mathcal N,k\in \mathcal K}\left\{\epsilon_n^k \right\}}\label{eqn:proof-enhanced-flow-DBP-U}\\
&\liminf_{t\to\infty}\sum_{n\in \mathcal N,k\in \mathcal K} g^k_n\left(\overline \gamma^k_n(t)\right)\nonumber\\
  \geq&
\sum_{n\in \mathcal N,k\in \mathcal K}
g^k_n\left( \overline \alpha^{k*}_n\left(\boldsymbol \epsilon\right)\right)-\frac{2N\hat B}{W}\label{eqn:proof-enhanced-flow-DBP-g}
\end{align}
As in \cite[page 88]{Georgiadis-Neely-Tassiulas:2006}, we optimize the R.H.S. of  \eqref{eqn:proof-enhanced-flow-DBP-U} and \eqref{eqn:proof-enhanced-flow-DBP-g} over all possible $\boldsymbol \epsilon\in \Lambda$. Thus, we can show \eqref{eqn:enhanced-flow-DBP-U} and
\begin{align}
& \liminf_{t\to\infty}\sum_{n\in \mathcal N,k\in \mathcal K}g^k_n\left(\overline \gamma^k_n(t)\right) \nonumber\\
\geq&
\sum_{n\in \mathcal N,k\in \mathcal K}
g^k_n\left(\overline \alpha^{k*}_n\left(\mathbf 0\right)\right)-\frac{2N\hat B}{W}
\label{eqn:enhanced-flow-DBP-g-sim}
\end{align}
where $\overline \gamma^k_n(t)\triangleq \frac{1}{t}\sum_{\tau=1}^t\mathbb E[ \gamma^k_n(\tau)]$.
It is easy to prove $\overline \gamma^k_n(t)\leq \overline \alpha^k_n(t)$ by showing the stability of the virtual queues. Thus, we can show \eqref{eqn:enhanced-flow-DBP-g} based on \eqref{eqn:enhanced-flow-DBP-g-sim}. We complete the proof.

\end{document}